\theoremstyle{plain}
\newtheorem{thm}{Theorem}
\newtheorem{lem}{Lemma}
\newtheorem{prop}{Proposition}
\newtheorem*{thm**}{Theorem}
\newtheorem*{lem**}{Lemma}
\newtheorem*{cor**}{Corollary}
\newtheorem*{prop**}{Proposition}
\theoremstyle{definition}
\newtheorem{defi}{Definition}
\newtheorem*{defi**}{Definition}
\theoremstyle{remark}
\newtheorem{eg}{Example}
\newtheorem*{eg**}{Example}
\newtheorem{rmk}{Remark}
\newtheorem*{rmk**}{Remark}
\newcommand{\1}{\mbox{1}\hspace{-0.25em}\mbox{l}}
\newcommand{\R}{\mathbb{R}}
\newcommand{\C}{\mathbb{C}}
\newcommand{\Tr}[1]{\mathrm{Tr}\!\left[{#1}\right]\!}
\newcommand{\ketbra}[2]{\ket{#1}\hspace{-0.25em}\bra{#2}}
\newcommand{\LL}{\mathcal{L}}
\newcommand{\hi}{\mathcal{H}} 
\newcommand{\id}{\mathbbm{1}}
\newcommand{\A}{\mathsf{A}}
\newcommand{\B}{\mathsf{B}}
\newcommand{\D}{\mathsf{D}}
\newcommand{\E}{\mathsf{E}}
\newcommand{\M}{\mathsf{M}}
\newcommand{\T}{\mathsf{T}}
\newcommand{\vsigma}{\boldsymbol{\sigma}} 
\newcommand{\va}{\mathbf{a}} 
\newcommand{\vb}{\mathbf{b}} 
\newcommand{\vc}{\mathbf{c}} 
\newcommand{\vg}{\mathbf{g}} 
\newcommand{\vm}{\mathbf{m}} 
\newcommand{\vn}{\mathbf{n}} 
\newcommand{\vs}{\mathbf{s}} 
\newcommand{\vx}{\mathbf{x}} 
\newcommand{\vy}{\mathbf{y}}
\newcommand{\I}{\mathcal{I}}
\newcommand{\state}{\mathcal{S}} 
\newcommand{\vz}{\mathbf{z}}
\newcommand{\pTr}[2]{\mathrm{Tr}_{#1}[#2]}
\begin{document}  
	\title{Comparing quantum incompatibility of device sets from an operational
    perspective}
	
	\author{Kensei Torii}
	\email[]{toriikensei@nagoya-u.jp}
    \affiliation{Department of Mathematical Informatics, Nagoya University, Furo-cho, Chikusa-ku, Nagoya 464-8601, Japan}
    
	\author{Ryo Takakura}
	\email[]{takakura.ryo.qiqb@osaka-u.ac.jp}
	\affiliation{Center for Quantum Information and Quantum Biology, The University of Osaka, 1-2 Machikaneyama, Toyonaka,
Osaka 560-0043, Japan\looseness=-1}
    \affiliation{Graduate School of Science, The University of Osaka, 1-1 Machikaneyama, Toyonaka,
Osaka 560-0043, Japan\looseness=-1}

	\author{Ryotaro Imamura}
	\affiliation{Department of Nuclear Engineering, Kyoto University, Kyoto daigaku-katsura, Nisikyo-ku, Kyoto 615-8540, Japan\looseness=-1}

	\begin{abstract}
        To effectively utilize quantum incompatibility as a resource in quantum information processing, it is crucial to evaluate how incompatible a set of devices is. 
        In this study, we propose an ordering to compare incompatibility and reveal its various properties based on the operational intuition that larger incompatibility can be detected with fewer states.
        We especially focus on typical class of incompatibility exhibited by mutually unbiased qubit observables and numerically demonstrate that the ordering yields new classifications among sets of devices.
        Moreover, the equivalence relation induced by this ordering is proved to uniquely characterize mutually unbiased qubit observables among all pairs of unbiased qubit observables.
        The operational ordering also has a direct implication for a specific protocol called distributed sampling. 
	\end{abstract}	
	
	\maketitle

	\section{Introduction}
    One of the most distinctive features of quantum theory is that certain physical operations are impossible to perform simultaneously.
    This property is termed \textit{incompatibility} and describes several fundamental concepts in quantum theory \cite{Heinosaari_2016,RevModPhys.95.011003}: uncertainty relations (precisely measurement uncertainty relations) \cite{Heisenberg1927,PhysRevA.53.2038,PhysRevA.67.042105,PhysRevA.78.052119,PhysRevLett.112.050401,10.1063/1.4871444} are quantitative expressions of incompatibility for observables and the no-cloning theorem (the no-broadcasting theorem) \cite{Wootters1982,DIEKS1982271,PhysRevLett.76.2818,PhysRevLett.79.2153,FAN2014241} is an example of incompatibility for channels.
    Besides its fundamental significance, incompatibility is recognized as a valuable resource in quantum information processing \cite{PhysRevLett.103.230402,PhysRevA.98.012126,PhysRevLett.122.130402, PhysRevLett.122.130403,PhysRevLett.122.130404, Oszmaniec2019operational, 10.1063/1.5126496, PhysRevA.101.032331, PhysRevA.101.052306, PhysRevLett.124.120401, m7ln-tb1s}, for example playing a crucial role in quantum cryptography and quantum random access code \cite{RevModPhys.74.145,PhysRevLett.98.230501,BENNETT20147,Carmeli_2020, PhysRevA.107.062210}. It suggests the necessity to evaluate the degree of incompatibility of a set of quantum devices (or simply, a \textit{device set}) such as observables or channels, and various studies have been given in this direction \cite{Busch_2013, HEINOSAARI20141695, PhysRevA.89.062112, Haapasalo_2015, PhysRevA.92.022115, PhysRevA.96.022113, Designolle_2019,Mordasewicz_2022, PhysRevA.98.012126}.

    From an operational perspective, states spanning the whole state space are necessary to test whether a device set is incompatible.
    A previous study \cite{PhysRevA.104.032228} introduced a notion relaxing this necessity because one often encounters practical situations where only restricted states are available.
    There a set of devices is called $\state_0$-\textit{incompatible} if we can detect its incompatibility using only states in a subset $\state_0$ of the whole state space.
    Based on this concept, two quantifications of incompatibility, termed \textit{incompatibility dimension} and \textit{compatibility dimension}, were proposed.  
    The incompatibility dimension of a device set is defined as the minimum number of (affinely independent) states needed to detect incompatibility, while the compatibility dimension is the maximum number of states which may be necessary to detect incompatibility. 
    However, these quantities only take into account the affine dimension of the subset $\state_0$ and do not reflect its structures.
    In other words, if we concentrate on a specific subset of states, the (in)compatibility dimension may not work as an appropriate measure of incompatibility.
    
    In this paper, keeping motivated by the same operational intuition as \cite{PhysRevA.104.032228}, we propose an ordering among device sets that enables more fine-grained comparisons of their incompatibility.
    The key idea is that a device set is considered more incompatible if its incompatibility is easier to detect. 
    More precisely, if a device set $\mathbb{D}$ is $\state_0$-incompatible for every $\state_0$ such that another device set $\mathbb{E}$ is $\state_0$-incompatible, then $\mathbb{D}$ is regarded as ``more incompatible" than $\mathbb{E}$. 
    This ordering preserves primitive properties of incompatibility: the incompatibility never increases under convex combinations with compatible devices, classical post-processing of observables, and concatenation of channels. 
    Beyond this consistency, the ordering yields novel classifications of device sets, offering structural insights into the concept of incompatibility.
    We further explore the equivalence relation induced by the ordering. 
    In particular, we focus on mutually unbiased qubit observables as one of the most typically incompatible devise sets, and reveal that they are uniquely distinguishable from other pairs of unbiased qubit observables through this equivalence.
    Also, the ordering describes a benefit in a specific informational task called distributed sampling \cite{PhysRevA.100.042308}.

     This paper is organized as follows. In Section \ref{sec_preliminaries}, we review the concepts of incompatibility and $\state_0$-incompatibility, accompanied by several examples. We then introduce our main proposal, an operational ordering of incompatibility, and present its general properties in Section \ref{subsec_ordering}. Section \ref{subsec_dist_sampling} illustrates the connection between this ordering and an informational task. In Section \ref{sec_qubit}, we provide detailed investigations on mutually unbiased qubit observables. We first explore the equivalence relation in Section \ref{subsec_equivalence}.
     We then numerically demonstrate that the ordering leads to new classifications in Section \ref{subsec_numerical_analysis}.  Finally in Section \ref{sec_conclusion}, we summarize this paper and outline potential directions for future work.

	\section{Preliminaries} \label{sec_preliminaries}
	Throughout this article, we only consider finite-dimensional Hilbert spaces.
	For such a Hilbert space $\hi$, we denote the set of all linear operators on $\hi$ by $\LL(\hi)$, and the set of all quantum states (the \textit{state space}) associated with $\hi$ by $\state(\hi)$, i.e.,
    \[
	\LL(\hi)=\{A\colon\hi\to\hi\mid\mbox{linear}\},
	\]
	and
	\[
	\state(\hi)=\{\rho\in\LL(\hi)\mid \rho\ge \bm{0}, \Tr{\rho}=1\}.
	\]
	An observable with a finite outcome set $X$ is described by a positive operator-valued measure (POVM) $\A = \{ \A(x) \}_{x\in X}$ such that
    \[
    \A(x) \geq \bm{0},\,\sum_{x\in X} \A(x) = \id.
    \]
    For a measurement of $\A$ on a system in a state $\rho$, the probability of obtaining an outcome $x$ is given by $\Tr{\rho\A(x)}\,$.
    A transformation from a state on $\hi^{in}$ into another state on $\hi^{out}$ is described by a completely positive and trace-preserving (CPTP) map or a \textit{channel} $\Lambda: \state(\hi^{in})\to\state(\hi^{out})$. 
    Note that in this paper we mainly use the Schr\"{o}dinger picture.
    A measurement process that provides an outcome probability distribution on $X$ and a post-measurement state is represented by an operation-valued measure or an \textit{instrument} $\mathcal{I}=\{\mathcal{I}_x\}_{x\in X}$ such that
    \[
    \bm{0} \leq \mathcal{I}_x(\rho)\leq \1,\, \sum_{x\in X}\Tr{\mathcal{I}_x(\rho)}=1,
    \]
    for all $\rho\in\state(\hi^{in})$. Each $\mathcal{I}_x$ is a completely positive and trace non-increasing map, and the sum $\sum_x \mathcal{I}_x$ is a channel. For an input state $\rho$, the quantity $\Tr{\mathcal{I}_x(\rho)}$ represents the probability of obtaining the outcome $x$, and the normalized state $\mathcal{I}_x(\rho)/\Tr{\mathcal{I}_x(\rho)}$ is the corresponding post-measurement state. 
    For more information on quantum measurement theory, we refer the reader to Refs. \cite{Heinosaari_Ziman_2011,Busch2016Quantum}.

    These three types of objects can be described comprehensively as maps acting on $\state(\hi^{in})$. 
    In fact, an observable $\A$ can be identified with a map $\rho \mapsto \{\Tr{\rho\A(x)}\}_x$ whose output is a probability distribution. A channel $\Lambda$ clearly lies within this framework. Also, an instrument $\mathcal{I}$ can be rewritten as a map $\rho \mapsto \{\Tr{\mathcal{I}_x(\rho)}\otimes \mathcal{I}_x(\rho)\}_x$ whose output is a distribution of a probability and a post-measurement state.
    We describe these objects by the notion of \textit{devices} in a unified way. The input space of a device $\D$ is always $\state(\hi^{in})$, while the output space depends on the type of device; $\mathcal{P}(X)$ for an observable, $\state(\hi^{out})$ for a channel, and $\mathcal{P}(X)\otimes \state(\hi^{out})$ for an instrument. Here $\mathcal{P}(X)=\{P=\{P(x)\}_{x\in X}\mid 0\le P(x)\le1,\sum_x P(x)=1\}$ denotes the set of all probability distributions over $X$.

    In this paper, we will study \textit{device sets}, specifically sets of observables (\textit{observable sets}), sets of channels (\textit{channel sets}), and pairs of an observable and a channel (\textit{observable-channel pairs}).
    Now we present the definition of incompatibility.
	\begin{defi}\label{def_incomp}
		A device set $\mathbb{D}=\{\D_1, \ldots, \D_n\}$ is called \textit{compatible} if there exists a device $\D'$, called a \textit{joint device}, such that each device $\D_i$ is a marginal of $\D'$. The input space of $\D'$ is $\state(\hi^{in})$ and its output space is the tensor product of the output spaces of $\D_1,\ldots,\D_n$. 
        If no such joint device exists, the set $\mathbb{D}$ is called \textit{incompatible}. 
	\end{defi}

    We will present three examples illustrating this definition: the incompatibility of observable sets, channel sets, and observable-channel pairs.
    
\begin{eg}[Observable set] \label{eg:incomp_obs}
    First we illustrate the incompatibility of an observable set $\mathbb{A}=\{\A_1,\ldots,\A_n\}$ each with an outcome set $X_i$. The set $\mathbb{A}$ is compatible (or \textit{jointly measurable}) if and only if we can construct a joint observable $\A'$ with outcome set $X_1\times\cdots\times X_n$ such that
    \begin{align*}
        \Tr{\rho \A_1(x_1)}&=\sum_{x_2,\ldots,x_n}\Tr{\rho \A'(x_1,\ldots,x_n)},\\
        \Tr{\rho \A_2(x_2)} &=\sum_{x_1,x_3,\ldots,x_n}\Tr{\rho\A'(x_1,\ldots,x_n)},\\
        &\vdots\\
        \Tr{\rho \A_n(x_n)} &= \sum_{x_1,\ldots,x_{n-1}}\Tr{\rho \A'(x_1,\ldots,x_n)}
    \end{align*}
    for all $x_i\in X_i,~i=1,\ldots,n$ and $\rho \in \state(\hi)$. 
    Let us further study the case $\hi=\C^2$ (a single qubit system). A pair $\{\A_1,\A_2\}$ of qubit observables each with a binary output $\pm$ is described by POVMs of the form
    \begin{align}
        \A_i(+) = \frac{1}{2}\big(a_i^0 \id + \va_i \cdot \vsigma\big),\, \A_i(-) = \id - \A_i(+) \label{def_QO}
    \end{align}
    for $i=1,\,2$. Here $\va_i \in \R^3$ is a vector satisfying $|\va_i|\leq a_i^0\leq 2-|\va_i|$ and $\vsigma=(\sigma_x,\sigma_y,\sigma_z)^{\T}$ is a vector of Pauli operators.
    The necessary and sufficient condition for $\{\A_1,\A_2\}$ to be compatible is \cite{PhysRevA.81.062116} 
    \begin{align}
        F &=: \min \biggl\{ 1\!-\!|\alpha|, 1\!-\!|\beta|, \notag \\
        & \sum_{\nu=\pm} |\va_1 \!+\! \va_2 \!+\! \nu \vg| \!+\! \sum_{\nu=\pm} |\va_1 \!-\! \va_2 \!+\! \nu \vg|-4 \biggr\} \leq  0, \label{cond_comp}
    \end{align}
    where
    \begin{align*}
        \gamma &= \va_1 \cdot \va_2 - (a_1^0 -1)(a_2^0 -1),\\
        \alpha &= \frac{(a_2^0 + \gamma a_1^0- \gamma -1)|\va_2|^2 - (a_1^0 + \gamma a_2^0 - \gamma -1) \va_1 \cdot \va_2}{|\va_1 \times \va_2|^2} ,\\
        \beta &= \frac{(a_1^0 + \gamma a_2^0- \gamma -1)|\va_1|^2 - (a_2^0 + \gamma a_1^0 - \gamma -1) \va_1 \cdot \va_2}{|\va_1 \times \va_2|^2} ,\\
        \vg &= \alpha \va_1 + \beta \va_2.
    \end{align*}
    Equivalent conditions are also derived in Refs.\cite{PhysRevA.78.012315,Busch2010}.
    A special type of \eqref{def_QO} is an \textit{unbiased qubit observable} \cite{Busch2009-2}, denoted by $\A_\mathrm{ub}^{\va}$.
    It is given by setting $a^0=1$ in \eqref{def_QO}, i.e.,
    \begin{align}
    \A^{\va}_\mathrm{ub}(\pm) = \frac{1}{2}\big(\id \pm \va \cdot \vsigma\big). \label{def_UQO}
    \end{align}
    Here $\va \in \R^3$ is a vector with $|\va|\leq 1$. 
    The magnitude $|\va|$ represents the sharpness of the measurement with smaller values indicating more noise. 
    A pair $\mathbb{A}_\mathrm{ub}:=\{\A^{\va_1}, \A^{\va_2}\}$ of unbiased qubit observables is compatible if and only if \cite{PhysRevD.33.2253}
    \begin{align}
        |\va_1 + \va_2| + |\va_1 - \va_2| \leq 2.\label{cond_comp_UQO}
    \end{align}
    If two vectors $\va_1$ and $\va_2$ are orthogonal, the pair $\mathbb{A}_\mathrm{ub}$ is specifically called \textit{mutually unbiased} \cite{Wootters1986}.
    For mutually unbiased qubit observables $\mathbb{A}_\mathrm{mub}^t :=\{\A_\mathrm{ub}^{t\vx},\A_\mathrm{ub}^{t\vy}\}$, where $\vx=(1,0,0)^\mathsf{T},\,\vy=(0,1,0)^\mathsf{T}$ in $\R^3$ and $t\in [0,1]$, condition \eqref{cond_comp_UQO} reduces to $t\leq \frac{1}{\sqrt{2}}$.
    \end{eg}

    \begin{eg}[Channel set]
    Let $\bm{\Lambda}=\{\Lambda_1,\ldots,\Lambda_n\}$ be a channel set with a common input space $\state(\hi^{in})$ but potentially different output spaces $\state(\hi^{out}_1),\ldots,\state(\hi^{out}_n)$. The set $\bm{\Lambda}$ is compatible if and only if we can construct a joint channel $\Lambda':\state(\hi^{in})\to\state(\hi^{out}_1)\otimes\cdots\otimes\state(\hi^{out}_n)$ satisfying
    \begin{align*}
        \Lambda_1(\rho) &= \pTr{2,\ldots,n}{\Lambda'(\rho)},\\
        \Lambda_2(\rho)&=\pTr{1,3,\ldots,n}{\Lambda'(\rho)},\\
        &\vdots\\
        \Lambda_n(\rho) &= \pTr{1,\ldots,n-1}{\Lambda'(\rho)}
    \end{align*}
    for all $\rho\in\state(\hi^{in})$ \cite{Heinosaari_2017}. 
    Here $\mathrm{Tr}_i$ denotes the partial trace over the $i$th Hilbert space. 
    As a typical example, we now consider the channel pair $\{\mathsf{id},\mathsf{id}\}$, where $\mathsf{id}\colon\rho\mapsto\rho$ is the identity channel on $\state(\hi)$. 
    The pair $\{\mathsf{id},\mathsf{id}\}$ is known to be incompatible (the \textit{no-broadcasting theorem} \cite{PhysRevLett.76.2818}), which means that there does not exist a joint channel $\mathsf{id}'\colon \state(\hi)\to\state(\hi)\otimes\state(\hi)$ such that
     \[
     \pTr{i}{\mathsf{id}'(\rho)}=\mathsf{id}(\rho)=\rho \quad(i=1,2)
     \]
    for all $\rho\in\state(\hi)$.
    \end{eg}

    \begin{eg}[Observable-channel pair] 
    Consider an observable $\A=\{\A(x)\}_{x\in X}$ and a channel $\Lambda$ with the same input space $\state(\hi^{in})$.
    When the pair $\{\A,\Lambda\}$ is compatible, their joint device is an instrument $\I=\{\I_x\}_{x\in X}$ such that 
    \begin{align}
    \Tr{\rho\A(x)}= \Tr{\I_x(\rho)}\,,~\Lambda(\rho)=\sum_{x\in X}\I_x(\rho), \label{joint_device_ins}
    \end{align}
    for all $x\in X$ and $\rho\in\state(\hi^{in})$. 
    We focus on an unbiased qubit observable $\A^{\va}_\mathrm{ub}$ and a depolarizing channel $\Lambda^{(t)}$, which is a simple model for measurement disturbance. 
    This channel $\Lambda^{(t)}$ is defined by
    \[
    \Lambda^{(t)}(\rho)=t\rho+(1-t)\frac{1}{2}\1
    \]
    for $t\in[0,1]$, where smaller $t$ represents larger disturbance.
    The pair $\{\A^{\va}_\mathrm{ub},\Lambda^{(t)}\}$ is compatible if and only if \cite{PhysRevA.97.022112}
    \[
    |\va|\le\frac{1}{2}\Big\{1-t +\sqrt{(1-t)(1+3t)}\Big\}.
    \]
    In the extreme case of $t=1$, the observable $\A_\mathrm{ub}^{\va}$ is compatible with $\Lambda^{(t=1)}=\mathsf{id}$ only if $\va$ is the zero vector and thus $\A_\mathrm{ub}^\va$ is a trivial observable. This means that we cannot get any information without disturbance \cite{Busch2009}.
\end{eg}

\begin{rmk}
    For a set of instruments, there are several definitions of incompatibility, such as traditional incompatibility \cite{Heinosaari2014}, parallel incompatibility \cite{PhysRevA.105.052202,Leppajarvi2024incompatibilityof}, and q-incompatibility \cite{Buscemi2023unifyingdifferent}. 
    Nonetheless, we will not discuss them in this paper.
\end{rmk}

    According to Definition \ref{def_incomp}, a set of states (a \textit{state set}) that spans the whole $\state(\hi^{in})$ is required to determine whether a device set is incompatible. 
    However, as mentioned in the introduction, we often faced with the situations where only restricted states are accessible.
    To address these situations, a relaxed notion of incompatibility was introduced in \cite{PhysRevA.104.032228}.
    
\begin{defi} \label{def:S0-incomp}
	Let $\state_0$ be a subset of $\state(\hi^{in})$.
	A device set $\mathbb{D}=\{\D_1, \ldots, \D_n\}$ is called \textit{$\state_0$-compatible} if there exists a compatible device set $\tilde{\mathbb{D}}=\{ \tilde{\D}_1, \ldots , \tilde{\D}_n \}$ such that
    \begin{align}
    \tilde{\D}_i(\rho) = \D_i(\rho) \label{def:S0-comp}
    \end{align}
    for all $i=1,\ldots, n$ and $\rho \in \state_0$. Otherwise, the set $\mathbb{D}$ is called \textit{$\state_0$-incompatible}.
\end{defi}	

A compatible device set $\mathbb{D}$ is clearly $\state_0$-compatible for any $\state_0\subset\state(\hi^{in})$. Conversely if $\mathbb{D}$ is $\state_0$-incompatible for some $\state_0$, we conclude that $\mathbb{D}$ is an incompatible device set.
In this sense, the $\state_0$-incompatibility of a device set is interpreted as the detectability of its incompatibility using only states in $\state_0$.
Thus we are interested in the cases where the set $\mathbb{D}$ is incompatible. 
Besides, in extreme cases the concept of $\state_0$-incompatibility becomes trivial. 
A state set $\state_0$ with only one state can never detect any incompatibility. 
Also, a state set $\state_0$ that spans the whole state space 
makes the notion of $\state_0$-incompatibility identical to standard incompatibility. 

Thanks to the linearity of \eqref{def:S0-comp}, the $\state_0$-incompatibility is equivalent to the $\bar{\state}_0$-incompatibility for the state set $\bar{\state}_0$ defined as
\[
\bar{\state}_0=\bigg\{\rho\in\state(\hi^{in}) \bigg|\, \rho=\sum_j c_j \rho_j,\, c_j\in\C,\,\rho_j\in\state_0\bigg\}.
\]
Since the state set $\bar{\state}_0$ is easier to handle than $\state_0$, we often regard state sets as affine subspaces of $\state(\hi^{in})$.

We exemplify the $\mathcal{S}_0$-incompatibility of a pair of qubit observables as a counterpart to Example \ref{eg:incomp_obs}. Henceforth, the state space $\state(\C^2)$ (the Bloch ball) is written by $\state$.

\begin{eg}[$\state_0$-incompatibility of qubit observables]\label{eg:S0-incomp_qubit_obs}
    Let us consider a subset $\state_0=\{\rho_j\}_j$ of $\state$, where each $\rho_j$ is characterized by a Bloch vector $\vs_j\in\R^3$ as $\rho_j=(\1+\vs_j\cdot \vsigma)/2$.
    Then the associated affine subspace $\bar{\state_0}$ of $\state$ is 
    \[
    \bar{\state}_0 \!=\! \bigg\{ \rho \in \state \bigg|\, \rho \!=\! \frac{1}{2}\Big(\1 + \big(\sum_j c_j \vs_j\big)\cdot \vsigma\Big) ,c_j \!\in\! \R, \sum_j c_j \!=\! 1 \bigg\}.
    \]
    Except for the trivial cases mentioned above, we consider $\state_0$ with two or three linearly independent states. If $\state_0$ has three linearly independent states (such $\state_0$ is denoted by $\state_0^{(3)}$), then $\bar{\state_0}$ is an intersection of a 2D plane and the Bloch ball. 
    This $\state_0^{(3)}$ is characterized by the distance $r \in [0,1]$ from the origin to the 2D plane and a normal vector $\vn \in \R^3$ (see FIG. \ref{fig1}). 
    If $\state_0$ has two linearly independent states (such $\state_0$ is denoted by $\state_0^{(2)}$), then $\bar{\state_0}$ is an intersection of a 1D line and the Bloch ball. This $\state_0^{(2)}$ is characterized by the distance $r \in [0,1]$ from the origin to the 1D line, a normal vector $\vn \in \R^3$ in the 2D plane spanned by the origin and $\state_0$, and the other normal vector $\vm \in \R^3$ orthogonal to the 2D plane (FIG. \ref{fig2}).
    \begin{figure}[t]
        \centering
        \includegraphics[width=0.8\columnwidth]{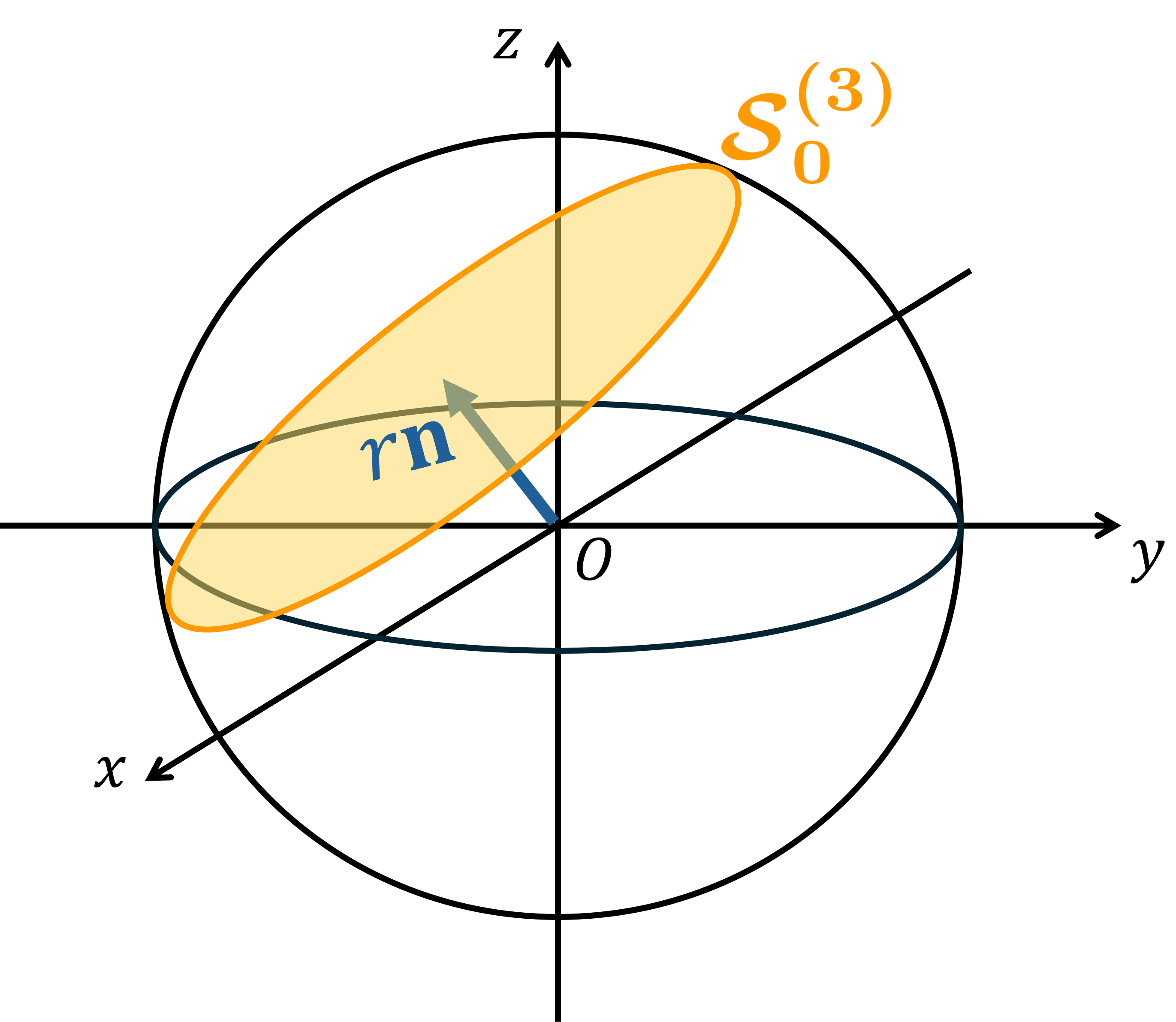}
        \caption{Parametrization of a set $\state_0$ consisting of three linearly independent states. Such $\state_0$ is identified with an intersection of a 2D plane and the Bloch ball, characterized by $r$ and $\vn$. The real parameter $r\in [0,1)$ denotes the distance from the origin to the intersection, and the vector $\vn \in \R^3$ is a normal vector.}
        \label{fig1}

        \centering
        \includegraphics[width=0.8\columnwidth]{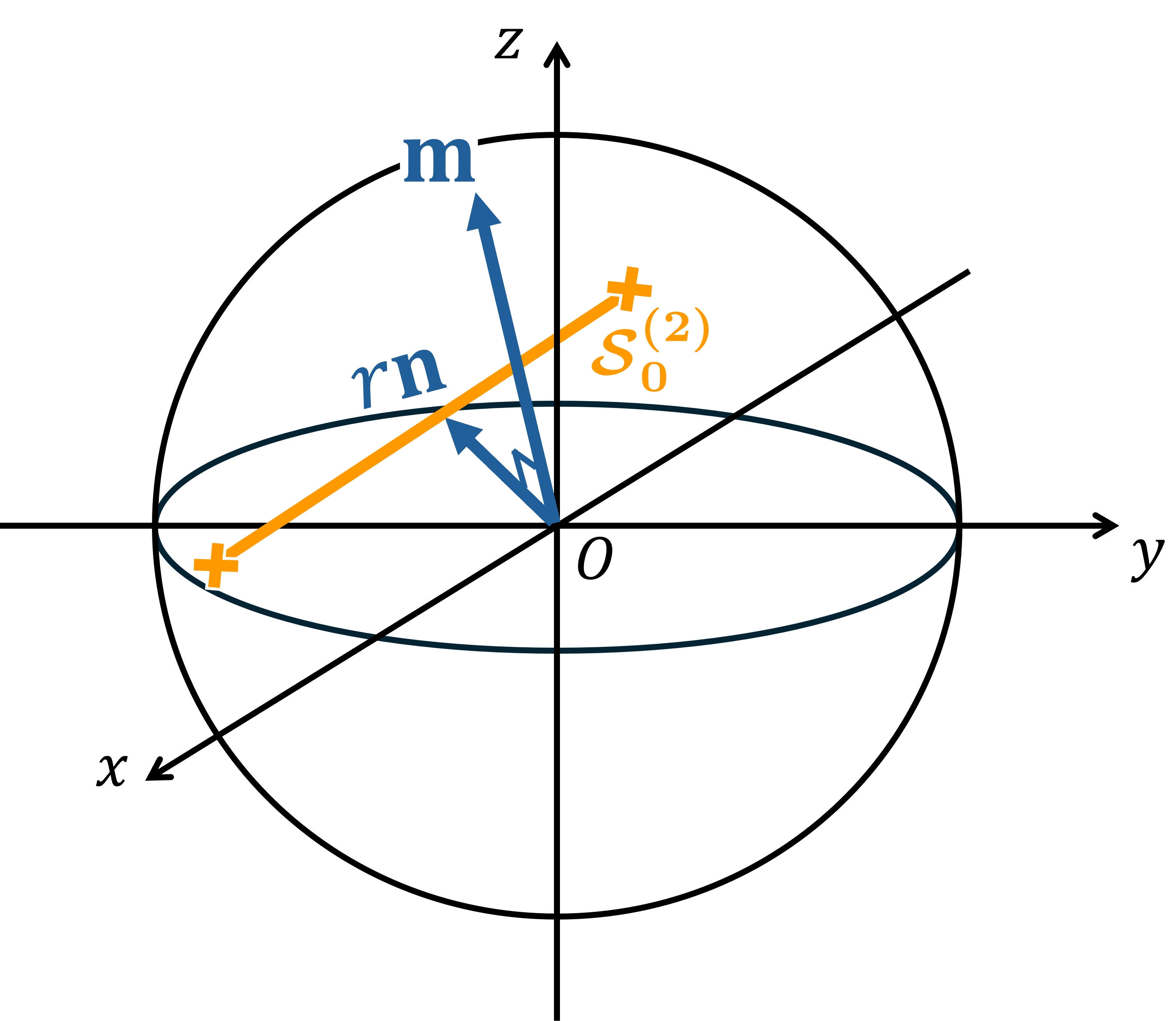}
        \caption{Parametrization of a set $\state_0$ consisting of two states. Such $\state_0$ is identified with an intersection of a 1D line and the Bloch ball, characterized by $r$, $\vn$ and $\vm$. The real parameter $r\in [0,1)$ denotes the distance from the origin to the intersection, the vector $\vn \in \R^3$ is a normal vector in the 2D plane spanned by the origin and $\state_0$, and the other normal vector $\vm\in\R^3$ is orthogonal to that 2D plane.}
        \label{fig2}
    \end{figure}
    By Definition \ref{def:S0-incomp}, the $\state_0$-incompatibility of $\{\A_1,\A_2\}$ is equivalent to the incompatibility of $\{\tilde{\A}_1,\tilde{\A}_2\}$ such that
    \begin{align}
     \mathrm{Tr}\big[\rho\tilde{\A}_i(\pm)\big]=\mathrm{Tr}\big[\rho\A_i(\pm)\big] \quad(i=1,2)\label{tilde_A}
    \end{align}
    for all $\rho\in\state_0$.
    If $\state_0$ has two linearly independent states, characterized by $r,\,\vn$ and $\vm$, then the observable $\tilde{\A}_i \equiv \tilde{\A}_i^{(r,\vn,\vm)}$ satisfying \eqref{tilde_A} has the form
    \begin{equation}
    \begin{aligned} 
        \tilde{\A}_i^{(r,\vn,\vm)}(+)&=\frac{1}{2}((a_i^0 -\lambda_i r)\1 +(\va_i+\lambda_i\vn+\xi_i\vm)\cdot\vsigma),\notag \\
        \tilde{\A}_i^{(r,\vn,\vm)}(-)&=\1-\tilde{\A}_i(+), 
    \end{aligned}
    \end{equation}
    where $\lambda_i,\, \xi_i$ are real parameters such that 
    \[|\va_i+\lambda_i\vn+\xi_i\vm|\le |a_i^0-\lambda_i r|\le2-|\va_i+\lambda_i\vn+\xi_i\vm|.\]
    Hence the necessary and sufficient condition for $\{\A_1,\A_2\}$ to be $\state_0$-compatible is 
    \begin{align}
        \min _{\lambda_1,\lambda_2,\xi_1,\xi_2}F_{\{\A_1,\A_2\}}^{(r,\vn,\vm)}(\lambda_1,\lambda_2,\xi_1,\xi_2) \geq 0. \label{cond_S0-comp_(2)}
    \end{align}
    Here $F_{\{\A_1,\A_2\}}^{(r,\vn,\vm)}$ is the left-hand side of \eqref{cond_comp} for two observables $\tilde{\A}_1^{(r,\vn,\vm)}$ and $\tilde{\A}_2^{(r,\vn,\vm)}$ and regarded as a function of $\lambda_1,\,\lambda_2,\,\xi_1$ and $\xi_2$.
    Similarly, if $\state_0$ has three linearly independent states, characterized by $r$ and $\vn$, then $\tilde{\A}_i\equiv\tilde{\A}_i^{(r,\vn)}$ is described as
    \begin{align*}
     \tilde{\A}_i^{(r,\vn)}(+)&=\frac{1}{2}((a_i^0 -\lambda_i r)\1 +(\va_i+\lambda_i\vn)\cdot\vsigma),\notag \\
     \tilde{\A}_i^{(r,\vn)}(-)&=\1-\tilde{\A}_i(+), 
    \end{align*}
    where $\lambda_i$ is a real parameter satisfying 
    \[|\va_i+\lambda_i\vn|\le |a_i^0-\lambda_i r|\le2-|\va_i+\lambda_i\vn|.\]
    Thus the counterpart of the condition \eqref{cond_S0-comp_(2)} is 
    \begin{align}
        \min _{\lambda_1,\lambda_2}F_{\{\A_1,\A_2\}}^{(r,\vn)}(\lambda_1,\lambda_2) \geq 0, \label{cond_S0-comp_(3)}
    \end{align}
     where $F_{\{\A_1,\A_2\}}^{(r,\vn)}$ is again the left-hand side of \eqref{cond_comp} for two observables $\tilde{\A}_1^{(r,\vn)}$ and $\tilde{\A}_2^{(r,\vn)}$ and regarded as a function of $\lambda_1$ and $\lambda_2$.
    There is a simple method for determining the $\state_0$-(in)compatibility for specific $\state_0$ and an observable pair. We especially denote a state set $\state_0^{(3)}$ including the origin by $\state_0^{(R)}$, i.e.,  
    \begin{equation}
    \begin{aligned} \label{def_S_R}
    &\mathcal{S}_0^{(R)} = \{ \rho = (\1 + \vs \cdot \vsigma)/2 |\, \vs \in R \}, \\
    &R = \{ \vs \in \R^3|\, |\vs| \leq 1, \vs \cdot \vn = 0 \}, ~\vn \in \R^3,~ |\vn|=1.
    \end{aligned}
    \end{equation}
    For a given $\state_0^{(R)}$, a pair $\{\A^{\va_1},\A^{\va_2}\}$ of unbiased qubit observables is $\state_0^{(R)}$-compatible if and only if
    \begin{align}
        |P_R(\va_1+\va_2)|+|P_R(\va_1-\va_2)|\leq2, \label{cond_S_R-comp}
    \end{align}
    where $P_R$ is the projection onto $R$ in $\R^3$. This is because the $\state_0$-incompatibility of $\{\A_\mathrm{ub}^{\va_1},\A_\mathrm{ub}^{\va_2}\}$ can be regarded as the incompatibility of their restrictions on $\state_0^{(R)}$, by treating the convex set $\state_0^{(R)}$ as a state space.     
\end{eg}

\begin{rmk}
   Regarding more general cases including higher-dimensional systems ($\dim \hi >2$), a larger number of observables, and other types of devices, it becomes more difficult to verify the $\state_0$-incompatibility. 
   First, for observables in higher-dimensional systems, the condition to judge their incompatibility as in \eqref{cond_comp} remains unknown.
   There is also a difficulty that the counterpart of $\tilde{\A}_i$ in \eqref{tilde_A} becomes more complicated because the Bloch representation in this case needs more parameters and puzzling restrictions \cite{KIMURA2003339}.
   Next, for a larger number of observables, the necessary and sufficient condition for three unbiased qubit observables to be compatible was revealed in Refs. \cite{Pal_2011, yu2013quantumcontextualityjointmeasurement}.
   However, we cannot apply this condition since the counterpart of $\tilde{\A}_i$ can be biased, and the analysis becomes similarly difficult for more than three observables.
   Finally, for channel pairs or observable-channel pairs, the general conditions for their compatibility are unknown even in a qubit system.
\end{rmk}

    From a viewpoint of detecting incompatibility with only restricted states, two quantifications of incompatibility were proposed \cite{PhysRevA.104.032228}.

\begin{defi}\label{incomp_dim}
    For an incompatible device set $\mathbb{D}=\{\D_1,\ldots,\D_n\}$, the incompatibility dimension $\chi_\mathrm{inc}$ and the compatibility dimension $\chi_\mathrm{com}$ are defined by
    \begin{align*}
        \chi_\mathrm{inc}(\mathbb{D})=\min_{\state_0\subset\state(\hi)} \big\{\dim \textrm{aff} \state_0 +1 |~\mathbb{D}\colon\state_0\text{-incompatible}\big\}
    \end{align*}
    and
    \begin{align*}
        \chi_\mathrm{com}(\mathbb{D})=\max_{\state_0\subset\state(\hi)} \big\{\dim \textrm{aff} \state_0 +1 |~\mathbb{D}\colon\state_0\text{-compatible}\big\}.
    \end{align*}
    Here $\dim \mathrm{aff} \state_0$ is the affine dimension of the affine hull of $\state_0$. 
\end{defi}
The incompatibility dimension is operationally interpreted as a minimum number of states needed to detect the incompatibility if one optimally chooses the available states.
On the other hand, the compatibility dimension represents the maximum number of (affinely independent) states one may need to detect the incompatibility.
These quantities $\chi_\mathrm{inc}$ and $\chi_\mathrm{com}$ take integer values, where smaller values indicate that the device set is more incompatible.
	
\section{Operational comparison of  incompatibility}\label{sec_ordering}
\subsection{Operational ordering of incompatibility}\label{subsec_ordering}
    We introduce a binary relation motivated by the concept of $\state_0$-incompatibility to compare incompatibility of device sets. This is the main object of this paper. 
	\begin{defi}
		For pairs of device sets $\mathbb{D}=\{\D_1, \ldots, \D_n\}$ and $\mathbb{E}=\{\E_1, \ldots, \E_m\}$, we write $\mathbb{E}\preceq_\mathrm{inc}\mathbb{D}$ if $\mathbb{D}$ is $\state_0$-incompatible for any state set $\state_0\subset\state(\hi^{in})$ such that $\mathbb{E}$ is $\state_0$-incompatible, or equivalently, $\mathbb{E}$ is $\state_0$-compatible for any $\state_0\subset\state(\hi^{in})$ such that $\mathbb{D}$ is $\state_0$-compatible.
        Clearly, this relation $\preceq_\mathrm{inc}$ is a pre-ordering of device sets.
	\end{defi}
	By definition, if a device set $\mathbb{E}$ is compatible, then $\mathbb{E}\preceq_\mathrm{inc}\mathbb{D}$ holds for any device set $\mathbb{D}$. Also, the relation implies that the (in)compatibility dimension of $\mathbb{D}$ is less than or equal to $\mathbb{E}$, i.e., $\chi_\mathrm{inc}(\mathbb{D})\le\chi_\mathrm{inc}(\mathbb{E})$ and $\chi_\mathrm{com}(\mathbb{D})\le\chi_\mathrm{com}(\mathbb{E})$. 
    Although the set $\mathbb{E}$ may have a different number of devices from $\mathbb{D}$, hereafter we usually consider that $\mathbb{D}$ and $\mathbb{E}$ have the same number of devices. 
    
    The ordering $\preceq_\mathrm{inc}$ is consistent with known properties of incompatibility.
    We start with a basic property that convex combination with a compatible device set does not increase the incompatibility.
    \begin{prop}\label{fund_features} 
        Let $\mathbb{D}=\{\D_1,\ldots,\D_n\},~\mathbb{E}=\{\E_1,\ldots,\E_n\}$ be device sets and let $\mathbb{N}=\{\mathsf{N}_1,\ldots,\mathsf{N}_n\}$ be a compatible device set.
        Suppose that $\mathbb{E}$ is a convex combination of $\mathbb{D}$ and $\mathbb{N}$, that is, there exists a real number $\lambda\in[0,1]$ such that
        \[
        \E_i = \lambda \D_i + (1-\lambda)\mathsf{N}_i
        \]
        for every $i=1,\ldots,n$. Then $\mathbb{E} \preceq_\mathrm{inc} \mathbb{D}$ holds.
    \end{prop}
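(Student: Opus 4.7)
The plan is to prove the contrapositive form of $\mathbb{E}\preceq_\mathrm{inc}\mathbb{D}$: for every $\state_0\subset\state(\hi^{in})$ such that $\mathbb{D}$ is $\state_0$-compatible, I will exhibit an explicit compatible set $\tilde{\mathbb{E}}$ witnessing that $\mathbb{E}$ is also $\state_0$-compatible. The natural candidate is the convex mixture, at the level of joint devices, of a joint device for the compatible approximation of $\mathbb{D}$ on $\state_0$ and a joint device for $\mathbb{N}$.

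Concretely, first fix $\state_0$ with $\mathbb{D}$ being $\state_0$-compatible, and invoke Definition \ref{def:S0-incomp} to obtain a compatible set $\tilde{\mathbb{D}}=\{\tilde{\D}_1,\ldots,\tilde{\D}_n\}$, with joint device $\tilde{\D}'$, satisfying $\tilde{\D}_i(\rho)=\D_i(\rho)$ for every $i$ and every $\rho\in\state_0$. Second, use the compatibility of $\mathbb{N}$ to fix a joint device $\mathsf{N}'$ of $\{\mathsf{N}_1,\ldots,\mathsf{N}_n\}$. Third, define
\[
\tilde{\E}' := \lambda\tilde{\D}' + (1-\lambda)\mathsf{N}',
\]
which is again a device of the same type (observable, channel, or instrument) because each of the relevant classes is convex: convex combinations of POVMs, CPTP maps, and instruments respectively remain POVMs, CPTP maps, and instruments. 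Fourth, I will use the fact that marginalization is a linear operation — summing outcomes for observables, taking partial traces for channels, and the combined operations for instruments — so the $i$th marginal of $\tilde{\E}'$ equals $\lambda\tilde{\D}_i+(1-\lambda)\mathsf{N}_i$, giving a compatible set $\tilde{\mathbb{E}}=\{\tilde{\E}_1,\ldots,\tilde{\E}_n\}$ with joint device $\tilde{\E}'$. Finally, for any $\rho\in\state_0$,
\[
\tilde{\E}_i(\rho) = \lambda\tilde{\D}_i(\rho) + (1-\lambda)\mathsf{N}_i(\rho) = \lambda\D_i(\rho) + (1-\lambda)\mathsf{N}_i(\rho) = \E_i(\rho),
\]
which establishes $\state_0$-compatibility of $\mathbb{E}$ via $\tilde{\mathbb{E}}$.

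I expect no deep obstacle: the argument is essentially bookkeeping on convexity and linearity. The only item that deserves care is the unified treatment across the three device types considered in the paper (observables, channels, observable-channel pairs, and the corresponding joint objects such as instruments for the mixed case), because ``marginal'' and ``joint device'' mean slightly different operations in each case. Verifying once that each such marginalization commutes with convex combinations handles all cases simultaneously, so the proof reduces to the two displayed identities above.
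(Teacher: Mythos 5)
Your proposal is correct and follows essentially the same route as the paper's proof: take the joint device $\tilde{\D}'$ of a compatible approximation of $\mathbb{D}$ on $\state_0$, the joint device $\mathsf{N}'$ of $\mathbb{N}$, and use the convex combination $\lambda\tilde{\D}'+(1-\lambda)\mathsf{N}'$ as the joint device witnessing $\state_0$-compatibility of $\mathbb{E}$. Your version merely spells out the convexity of each device class and the linearity of marginalization, which the paper leaves implicit.
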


    \begin{proof}
     For a subset $\state_0$ such that $\mathbb{D}$ is $\state_0$-compatible, we can find a compatible device set $\tilde{\mathbb{D}}=\{\tilde{\D}_1,\ldots,\tilde{\D}_n\}$ satisfying $\D_i(\rho)=\tilde{\D}_i(\rho)~(i=1,\ldots,n)$ for all $\rho\in\state_0$. 
     Denote by $\mathsf{N}'$ and $\tilde{\D}'$ the joint device of $\mathbb{N}$ and $\tilde{\mathbb{D}}$, respectively. Then the device $\E':=\lambda\tilde{\D}'+(1-\lambda)\mathsf{N}'$ gives each output $\E_i(\rho)$ as a marginal of $\E'(\rho)$ for $\rho\in\state_0$, which means that $\mathbb{E}$ is $\state_0$-compatible. 
    \end{proof}

    Furthermore, the ordering also preserves established properties of incompatibility for each types of device sets.
    First we focus on observable sets. The ordering properly expresses that post-processing of observables does not increase incompatibility \cite{RevModPhys.95.011003, PhysRevLett.122.130403}.     
	
	\begin{prop} \label{prop:post-processing}
        Let $\mathbb{A}=\{\A_1, \ldots, \A_n\}$ and $\mathbb{B} = \{\B_1, \ldots, \B_n\}$ be observable sets, where each $\A_i$ and $\B_i$ have outcome sets $X_i$ and $Y_i$, respectively.
        Suppose that each $\B_i$ is a post-processing of $\A_i$ via a stochastic matrix $p_i = \{p_i(y_i|x_i)\}_{(x_i, y_i)\in X_i \times Y_i}$, i.e., 
        \[
        \B_i(y_i) = \sum_{x_i\in X_i} p_i(y_i|x_i)\A_i(x_i)
        \]
        for every $y_i \in Y_i$ and $i=1,\ldots,n$. Then  $\mathbb{B} \preceq_\mathrm{inc} \mathbb{A}$ holds.
	\end{prop}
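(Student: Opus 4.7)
The plan is to show directly that for every state set $\state_0$ on which $\mathbb{A}$ is $\state_0$-compatible, the post-processed set $\mathbb{B}$ is also $\state_0$-compatible; this is the defining condition of $\mathbb{B} \preceq_\mathrm{inc} \mathbb{A}$. So fix an arbitrary $\state_0\subset \state(\hi^{in})$ with $\mathbb{A}$ $\state_0$-compatible. By \cref{def:S0-incomp}, there exists a compatible observable set $\tilde{\mathbb{A}}=\{\tilde{\A}_1,\ldots,\tilde{\A}_n\}$ with joint observable $\tilde{\A}'$ on $X_1\times\cdots\times X_n$ such that $\tilde{\A}_i(\rho)=\A_i(\rho)$ for all $\rho\in\state_0$ and $i=1,\ldots,n$.

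Next I would build a natural candidate for the $\state_0$-joint of $\mathbb{B}$ by post-processing $\tilde{\A}'$. Define
\[
\tilde{\B}'(y_1,\ldots,y_n)=\sum_{x_1\in X_1}\cdots\sum_{x_n\in X_n}\Big(\prod_{i=1}^n p_i(y_i|x_i)\Big)\tilde{\A}'(x_1,\ldots,x_n),
\]
and for each $i$ set $\tilde{\B}_i(y_i)=\sum_{x_i\in X_i}p_i(y_i|x_i)\tilde{\A}_i(x_i)$. A short check using $\sum_{y_i}p_i(y_i|x_i)=1$ and the fact that each $\tilde{\A}_i$ is the $i$-th marginal of $\tilde{\A}'$ shows that $\tilde{\B}_i$ is the $i$-th marginal of $\tilde{\B}'$, so $\tilde{\mathbb{B}}=\{\tilde{\B}_1,\ldots,\tilde{\B}_n\}$ is a compatible observable set.

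It then remains to verify that $\tilde{\mathbb{B}}$ agrees with $\mathbb{B}$ on $\state_0$. For any $\rho\in\state_0$ and any outcome $y_i\in Y_i$, linearity of the trace and the hypothesis $\tilde{\A}_i(\rho)=\A_i(\rho)$ on $\state_0$ give
\[
\Tr{\rho\tilde{\B}_i(y_i)}=\sum_{x_i}p_i(y_i|x_i)\Tr{\rho\tilde{\A}_i(x_i)}=\sum_{x_i}p_i(y_i|x_i)\Tr{\rho\A_i(x_i)}=\Tr{\rho\B_i(y_i)},
\]
which is exactly the $\state_0$-compatibility condition \eqref{def:S0-comp} for $\mathbb{B}$. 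Hence $\mathbb{B}\preceq_\mathrm{inc}\mathbb{A}$.

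There is no real obstacle here: the essence is that classical post-processing commutes with taking marginals, so a joint of the unprocessed observables transports to a joint of the post-processed ones, and the $\state_0$-agreement passes through the stochastic matrices by linearity. The only care needed is to keep the independent post-processings $p_1,\ldots,p_n$ acting on disjoint arguments of $\tilde{\A}'$, which is automatic from the product structure in the definition of $\tilde{\B}'$.
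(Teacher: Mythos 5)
Your proposal is correct and follows essentially the same route as the paper's proof: take a joint observable $\tilde{\A}'$ of a compatible set agreeing with $\mathbb{A}$ on $\state_0$, post-process it by the product $\prod_i p_i(y_i|x_i)$ to get $\tilde{\B}'$, and check that its marginals reproduce $\mathbb{B}$ on $\state_0$ by linearity. The only difference is that you name the marginals $\tilde{\B}_i$ explicitly before verifying agreement, which the paper does implicitly in one step.
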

	\begin{proof} 
        For each $\state_0$ such that $\mathbb{A}$ is $\state_0$-compatible, there exists a compatible observable set $\tilde{\mathbb{A}}=\{\tilde{\A}_1,\ldots,\tilde{\A}_n\}$ satisfying
        \[
        \mathrm{Tr}\big[\rho\tilde{\A}_i(x_i)\big] = \mathrm{Tr}\big[\rho\A_i(x_i)\big]
        \]
        for all $x_i\in X_i$, $i=1,\ldots,n$ and $\rho\in\state_0$. 
        Let $\tilde{\A}'$ be a joint observable of $\tilde{\mathbb{A}}$. 
        If we define an observable $\tilde{\B}'$ with an outcome set $Y_1\times\cdots\times Y_n$ as
        \[
        \tilde{\B}'(y_1,\ldots,y_n)=\sum_{x_1,\ldots,x_n} \bigg(\prod_i p_i(y_i|x_i)\bigg)\tilde{\A}'(x_1,\ldots,x_n),
        \]
        then we have
        \[
        \Tr{\rho \B_i(y_i)}=\sum_{y_1,\ldots,y_{i-1},y_{i+1},\ldots,y_n} \Tr{\rho\tilde{\B}'(y_1,\ldots,y_n)}
        \]
        for all $y_i\in Y_i$, $i=1,\ldots,n$ and $\rho\in\state_0$. This gives the $\state_0$-compatibility of $\mathbb{B}$. 
	\end{proof}
    Next we consider channel sets.
    It is known that concatenations of channels do not increase incompatibility \cite{Heinosaari_2017}.
    A concatenation pre-ordering of two channels $\Lambda:\state(\hi^{in})\to \state(\hi^{out})$ and $\Phi:\state(\hi^{in})\to \state(\mathcal{K}^{out})$ is defined as $\Phi \preceq_\mathrm{conc} \Lambda$ if there exists a channel $\Theta\colon\state(\hi^{out})\to\state(\mathcal{K}^{out})$ satisfying $\Phi = \Theta \circ \Lambda$.
    We show that the ordering $\preceq_\mathrm{inc}$ preserves this concatenation pre-ordering $\preceq_\mathrm{conc}$.
    In the following, the Heisenberg picture of a channel $\Lambda$ is represented by $\Lambda^\ast$.  
    
    \begin{prop} \label{prop:conc-inc}
        Let $\bm{\Lambda}=\{\Lambda_1,\dots,\Lambda_n \}$ and $\bm{\Phi}=\{\Phi_1, \dots , \Phi_n\}$ be channel sets, where $\Lambda_i\colon\state(\hi^{in})\to\state(\hi_i^{out})$ and $\Phi_i\colon\state(\hi^{in})\to\state(\mathcal{K}_i^{out})$ for each $i=1,\ldots,n$.
        If $\Phi_i \preceq_\mathrm{conc} \Lambda_i$ for every $i$, then $\bm{\Phi} \preceq_\mathrm{inc} \bm{\Lambda}$ holds.
    \end{prop}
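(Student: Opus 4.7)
The plan is to mirror the template of Propositions \ref{fund_features} and \ref{prop:post-processing}: take an arbitrary state subset $\state_0 \subset \state(\hi^{in})$ for which $\bm{\Lambda}$ is $\state_0$-compatible, and construct a joint channel for $\bm{\Phi}$ that reproduces each $\Phi_i$ on $\state_0$. By Definition \ref{def:S0-incomp}, the $\state_0$-compatibility of $\bm{\Lambda}$ supplies a compatible channel set $\tilde{\bm{\Lambda}}=\{\tilde{\Lambda}_1,\ldots,\tilde{\Lambda}_n\}$ with $\tilde{\Lambda}_i(\rho)=\Lambda_i(\rho)$ for all $\rho\in\state_0$, together with a joint channel $\tilde{\Lambda}'\colon\state(\hi^{in})\to\state(\hi_1^{out})\otimes\cdots\otimes\state(\hi_n^{out})$ whose $i$th marginal is $\tilde{\Lambda}_i$.

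By the concatenation hypothesis, each $\Phi_i=\Theta_i\circ\Lambda_i$ for some channel $\Theta_i\colon\state(\hi_i^{out})\to\state(\mathcal{K}_i^{out})$. The natural candidate joint channel for $\bm{\Phi}$ is then
\[
\tilde{\Phi}' := (\Theta_1\otimes\cdots\otimes\Theta_n)\circ\tilde{\Lambda}',
\]
which is CPTP as a composition of CPTP maps, and maps $\state(\hi^{in})$ into $\state(\mathcal{K}_1^{out})\otimes\cdots\otimes\state(\mathcal{K}_n^{out})$, the correct joint output space for $\bm{\Phi}$.

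The remaining step is to check the marginal property. Using that partial trace commutes with tensor-product channels on the traced-out factors (concretely, $\mathrm{Tr}_j\circ\Theta_j=\mathrm{Tr}_j$ since $\Theta_j$ is trace-preserving), I compute for arbitrary $\rho$:
\[
\pTr{j\neq i}{\tilde{\Phi}'(\rho)}=\Theta_i\bigl(\pTr{j\neq i}{\tilde{\Lambda}'(\rho)}\bigr)=\Theta_i(\tilde{\Lambda}_i(\rho)).
\]
For $\rho\in\state_0$ this equals $\Theta_i(\Lambda_i(\rho))=\Phi_i(\rho)$, so $\tilde{\Phi}'$ witnesses the $\state_0$-compatibility of $\bm{\Phi}$, giving $\bm{\Phi}\preceq_\mathrm{inc}\bm{\Lambda}$.

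I do not expect a serious obstacle: the construction is the evident channel analogue of the stochastic-matrix post-processing argument in Proposition \ref{prop:post-processing}. The only point requiring a line of care is the interchange of partial trace with a tensor product of channels, which follows from trace preservation of the $\Theta_j$ for $j\neq i$. Everything else is a direct unpacking of Definitions \ref{def_incomp} and \ref{def:S0-incomp}.
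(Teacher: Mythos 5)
Your proof is correct and follows essentially the same route as the paper: both construct the candidate joint channel $(\Theta_1\otimes\cdots\otimes\Theta_n)\circ\tilde{\Lambda}'$ from a joint channel witnessing the $\state_0$-compatibility of $\bm{\Lambda}$ and then verify its marginals. The only (cosmetic) difference is that you check the marginal identity directly in the Schr\"{o}dinger picture via $\mathrm{Tr}_{j\neq i}\circ(\Theta_1\otimes\cdots\otimes\Theta_n)=\Theta_i\circ\mathrm{Tr}_{j\neq i}$, whereas the paper establishes the same identity by testing against arbitrary operators $T_i$ in the Heisenberg picture.
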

    \begin{proof}
        The assumption means that $\Phi_i =\Theta_i\circ\Lambda_i$ with some channel $\Theta_i\colon\state(\hi_i^{out})\to\state(\mathcal{K}_i^{out})$ for each $i$.
        Consider $\state_0 \subset \state(\hi^{in})$ such that $\bm{\Lambda}$ is $\state_0$-compatible, then we can construct a channel $\Lambda'\colon\state(\hi^{in})\to\state (\hi_1^{out})\otimes\cdots\otimes\state(\hi_n^{out})$ satisfying  
        \[
        \Lambda_i(\rho)=\pTr{1,\ldots,\\i-1,i+1,\ldots,n}{\Lambda'(\rho)}
        \]
        for all $i=1,\ldots,n$ and $\rho \in \state_0$. Let us define a channel $\Phi'\colon \state(\hi^{in})\to\state(\mathcal{K}_1^{out})\otimes\cdots\otimes\state(\mathcal{K}_n^{out})$ by $\Phi' := (\Theta_1 \otimes \cdots \otimes \Theta_n)\circ\Lambda'$ and a state $\tilde{\rho}_i\in\state(\mathcal{K}_i^{out})$ by $\tilde{\rho}_i:=\pTr{1,\ldots,i-1,i+1,\ldots,n}{\Phi'(\rho)}$. We then have 
        \begin{align*}
            \Tr{\tilde{\rho}_i T_i}&=\Tr{\Phi'(\rho) (\1 \otimes \cdots \otimes \1 \otimes T_i\otimes \1\otimes \cdots \otimes \1)}\\
            &= \Tr{\Lambda'(\rho) (\1 \otimes \cdots \otimes \1 \otimes \Theta_i^\ast(T_i)\otimes \1\otimes\cdots\otimes\1)} \\
            &=\Tr{\Lambda_i(\rho)\Theta^*(T_i)}\\
            &= \Tr{\Phi_i(\rho) T_i}
        \end{align*}
        for all $\rho \in \state_0$ and $i=1,\ldots,n,~T_i \in \LL(\hi_i^{out})$. 
        The third equality follows because the trace over $\hi_j^{out}~(j\neq i)$ becomes unity.  
        Since this equation holds for all $T_i\in\LL(\hi_i^{out})$, we have
        \[
        \Phi_i(\rho)=\pTr{1,\ldots,i-1,i+1,\ldots,n}{\Phi'(\rho)} ~(=\tilde{\rho}_i)
        \]
        for all $i=1,\ldots,n$ and $\rho\in\state_0$.
        Hence the channel set $\bm{\Phi}$ is also $\state_0$-compatible. 
    \end{proof}
     
    Finally we consider observable-channel pairs.
    So far we have seen that the incompatibility does not increase under post-processing for observables and concatenations for channels. 
    These properties can be naturally combined for an observable-channel pair.
    To show this, we view an observable $\A=\{\A(x)\}_{x\in X}$ as a quantum-to-classical channel $\Gamma_{\A}\colon\state(\hi^{in})\to\state(\hi_X^{out})$ \cite{Wilde_2017} defined by 
    \[
    \Gamma_{\A}(\rho):=\sum_{x\in X} \Tr{\rho\A(x) } \ketbra{x}{x}.
    \]
    Here $\{\ket{x}|\,x\in X\}$ is an orthonormal basis of $\hi_X^{out} = \ell^2(X)$.
    Our argument relies on the following three known facts \cite{Heinosaari_2017}:
    \begin{itemize}
    \setlength{\parskip}{0cm} 
    \setlength{\itemsep}{0cm} 
        \item[(a)] Let $\{\A,\Lambda\}$ be an observable-channel pair, where $\A=\{\A(x)\}_{x\in X}$ is an observable on $\state(\hi^{in})$ and $\Lambda\colon\state(\hi^{in})\to\state(\hi^{out})$ is a channel. 
        The pair $\{\A,\Lambda\}$ is compatible if and only if the channel pair $\{\Gamma_{\A},\Lambda\}$ is compatible; 
        \item[(b)] For two observables $\A_1$ and $\A_2$ on $\state(\hi^{in})$, the relation $\Gamma_{\A_2} \preceq_\mathrm{conc} \Gamma_{\A_1}$ holds if and only if $\A_2$ is a post-processing of $\A_1$;
        \item[(c)] Let $\bm{\Upsilon}=\{\Upsilon_1,\ldots,\Upsilon_n\}$ and $\bm{\Phi}=\{\Phi_1,\ldots,\Phi_n\}$ be channel sets satisfying $\Phi_i \preceq_\mathrm{conc} \Upsilon_i$ for every $i=1,\dots,n$. If $\bm{\Upsilon}$ is compatible, then $\bm{\Phi}$ is also compatible.
    \end{itemize}
    \noindent Applying these three, we can easily prove the next proposition.
    \begin{prop} \label{prop:obs-channel}
        Let $\A_1,\, \A_2$ be observables and $\Lambda_1,\,\Lambda_2$ be channels. If $\A_2$ is a post-processing of $\A_1$ and $\Lambda_2 \preceq_\mathrm{conc} \Lambda_1$, then $\{\A_2,\Lambda_2\} \preceq_\mathrm{inc} \{\A_1,\Lambda_1\}$ holds. 
    \end{prop}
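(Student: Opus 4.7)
The plan is to lift the claim from observable–channel pairs to pairs of channels via fact (a), apply Proposition \ref{prop:conc-inc} there, and then descend back. The key lemma I would establish first is that the $\state_0$-compatibility of the observable–channel pair $\{\A,\Lambda\}$ is equivalent to the $\state_0$-compatibility of the channel pair $\{\Gamma_{\A},\Lambda\}$. The forward direction is immediate: a compatible witness $\{\tilde{\A},\tilde{\Lambda}\}$ agreeing with $\{\A,\Lambda\}$ on $\state_0$ yields, by fact (a), a compatible channel pair $\{\Gamma_{\tilde{\A}},\tilde{\Lambda}\}$, and $\Gamma_{\tilde{\A}}(\rho)=\Gamma_{\A}(\rho)$ on $\state_0$ because $\Gamma$ is linear in the POVM effects.

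For the converse, suppose $\{\tilde{\Phi},\tilde{\Lambda}\}$ is a compatible channel pair agreeing with $\{\Gamma_{\A},\Lambda\}$ on $\state_0$. The image of $\Gamma_{\A}$ lies in the diagonal subalgebra spanned by $\{\ketbra{x}{x}\}_{x\in X}$, so post-composing $\tilde{\Phi}$ with the pinching channel $\mathcal{D}\colon\sigma\mapsto\sum_x\bra{x}\sigma\ket{x}\ketbra{x}{x}$ leaves its action on $\state_0$ unchanged. The resulting channel $\mathcal{D}\circ\tilde{\Phi}$ equals $\Gamma_{\tilde{\A}}$ for the observable defined by $\Tr{\rho\tilde{\A}(x)}:=\bra{x}\tilde{\Phi}(\rho)\ket{x}$, and compatibility of $\{\tilde{\Phi},\tilde{\Lambda}\}$ is preserved under post-composing one marginal with $\mathcal{D}$. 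Fact (a) then turns compatibility of $\{\Gamma_{\tilde{\A}},\tilde{\Lambda}\}$ back into compatibility of $\{\tilde{\A},\tilde{\Lambda}\}$, which agrees with $\{\A,\Lambda\}$ on $\state_0$.

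Once this equivalence is in hand, the proposition follows in three lines. By fact (b), the post-processing hypothesis gives $\Gamma_{\A_2}\preceq_\mathrm{conc}\Gamma_{\A_1}$; combined with $\Lambda_2\preceq_\mathrm{conc}\Lambda_1$, Proposition \ref{prop:conc-inc} yields $\{\Gamma_{\A_2},\Lambda_2\}\preceq_\mathrm{inc}\{\Gamma_{\A_1},\Lambda_1\}$. Now for any $\state_0$ with $\{\A_1,\Lambda_1\}$ being $\state_0$-compatible, the lemma promotes this to $\state_0$-compatibility of $\{\Gamma_{\A_1},\Lambda_1\}$, which in turn gives $\state_0$-compatibility of $\{\Gamma_{\A_2},\Lambda_2\}$, and the lemma once more descends this to $\state_0$-compatibility of $\{\A_2,\Lambda_2\}$.

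The only nontrivial step is the converse half of the equivalence lemma, since the other half and the two invocations of facts (b) and Proposition \ref{prop:conc-inc} are essentially formal. If one does not wish to spell out the pinching argument, an alternative is to observe that fact (c) already tells us $\{\Gamma_{\A_2},\Lambda_2\}$ inherits compatibility from any compatible extension of $\{\Gamma_{\A_1},\Lambda_1\}$ via the post-processing stochastic matrix viewed as a channel $\Theta$ on $\state(\hi_X^{out})$, so one can construct the witnessing instrument for $\{\A_2,\Lambda_2\}$ directly from the one for $\{\A_1,\Lambda_1\}$ by post-processing its outcome label — bypassing Proposition \ref{prop:conc-inc} entirely but still relying on the same three ingredients.
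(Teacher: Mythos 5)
Your proposal is correct and follows essentially the same route as the paper: lift the pair to channels via $\Gamma_{\A}$, use fact (b) to convert post-processing into $\preceq_\mathrm{conc}$, propagate $\state_0$-compatibility through concatenation (the paper invokes fact (c) directly where you invoke Proposition~\ref{prop:conc-inc}, which is the same construction), and descend back. The one substantive difference is that you explicitly prove the equivalence between $\state_0$-compatibility of $\{\A,\Lambda\}$ and of $\{\Gamma_{\A},\Lambda\}$ — in particular the converse direction via the pinching channel $\mathcal{D}$, needed because the witness $\tilde{\Gamma}_{\A_2}=\Theta_1\circ\tilde{\Gamma}_{\A_1}$ is a priori only a channel and not of the form $\Gamma_{\tilde{\A}_2}$ — a step the paper's proof leaves implicit; your writeup is the more complete of the two on this point.
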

    \begin{proof}
        According to condition (b), we obtain $ \Gamma_{\A_2} \preceq_\mathrm{conc} \Gamma_{\A_1}$. Hence there exist channels $\Theta_1$ and $\Theta_2$ satisfying $\Gamma_{\A_2} = \Theta_1\circ\Gamma_{\A_1}$ and $\Lambda_2 = \Theta_2\circ\Lambda_1$, respectively. For $\state_0 \subset \state(\hi^{in})$ such that $\{\Gamma_{\A_1},\Lambda_1\}$ is $\state_0$-compatible, we can construct a compatible channel pair $\{ \tilde{\Gamma}_{\A_1},\tilde{\Lambda}_1\}$ that satisfies $\Gamma_{\A_1}(\rho)=\tilde{\Gamma}_{\A_1}(\rho)$ and $\Lambda_1(\rho)=\tilde{\Lambda}_1(\rho)$ for all $\rho\in\state_0$. If we set $\tilde{\Gamma}_{\A_2} := \Theta_1\circ\tilde{\Gamma}_{\A_1}$ and $\tilde{\Lambda}_2 := \Theta_2\circ\tilde{\Lambda}_1$, then $\{ \tilde{\Gamma}_{\A_2}, \tilde{\Lambda}_2\}$ is compatible because of condition (c). The channels $\tilde{\Gamma}_{\A_2}$ and $\tilde{\Lambda}_2$ fulfill $\tilde{\Gamma}_{\A_2}(\rho)= \Gamma_{\A_2}(\rho)$ and $\tilde{\Lambda}_2(\rho)=\Lambda_2(\rho)$ for all $\rho \in \state_0$. Thus $\{\A_2,\Lambda_2\}$ is $\state_0$-compatible. 
    \end{proof}
    
    We have so far investigated fundamental properties of the ordering $\preceq_\mathrm{inc}$. It is natural to introduce an equivalence relation based on this ordering.

    \begin{defi}\label{def_equiv}
        For two device sets $\mathbb{D}=\{\D_1, \ldots, \D_n\}$ and $\mathbb{E}=\{\E_1, \ldots, \E_m\}$, we write $\mathbb{D} \sim_\mathrm{inc} \mathbb{E}$ if both $\mathbb{D} \preceq_\mathrm{inc} \mathbb{E}$ and $\mathbb{E} \preceq_\mathrm{inc} \mathbb{D}$ are satisfied.
    \end{defi}

    The relation $\mathbb{D} \sim_\mathrm{inc}\mathbb{E}$ means that if the incompatibility of $\mathbb{D}$ can be detected using a state set $\state_0$, then the incompatibility of $\mathbb{E}$ also can be detected by the same $\state_0$, and vice versa.
    In this sense, there is no way to operationally distinguish the incompatibility of $\mathbb{D}$ from $\mathbb{E}$.
    Further discussion of this equivalence will be presented in Section \ref{subsec_equivalence}.

\subsection{Relation between operational ordering and distributed sampling}	\label{subsec_dist_sampling}

In this subsection, we demonstrate that the ordering $\preceq_\mathrm{inc}$ that compares incompatibility is related to the performance of observable sets in a specific informational task called distributed sampling \cite{PhysRevA.100.042308}.  
This task consists of the referee and two parties, Alice and Bob. The procedures are as follows: 
\begin{itemize}
    \setlength{\parskip}{0cm} 
    \setlength{\itemsep}{0cm} 
    \item[1.] The referee announces to both parties a state set $\mathcal{S}_0=\{\rho_j\}_{j=1}^m$ and an observable set $\{\A_i\}_{i=1}^n$ with a common outcome set $X$.
    \item[2.] The referee sends a state $\rho_j\in\state_0$ to Alice without informing the index $j$, and a classical label $i\in\{1,\ldots,n\}$ to Bob.
    \item[3.] 
    Alice performs the measurement of some observable or applies some channel on $\rho_j$ and sends its $j$-dependent outcome to Bob as a (classical or quantum) message.
    \item[4.] Bob makes an answer $x\in X$ with probability $P(x|j,i)$ considering the label $i$ from the referee and the $j$-dependent message from Alice.
    \item[5.] The participants repeat these steps. The goal for Alice and Bob is to reproduce the probability distributions $\{ \Tr {\rho_j \A_i(x)}\,\}_{x,j,i}$ using $P(x| j,i)$.
\end{itemize}
Here possible strategies are what observable or channel Alice chooses and how Bob makes an answer $x$.
The collection of probability distributions $\{ \Tr {\rho_j \A_i(x)}\,\}_{x,j,i}$ is called an \textit{$\state_0$-quantum behavior} and is defined as follows.
    \begin{defi}
        Let $X$ be an outcome set and let $i=1,\ldots,n$ and $j=1,\ldots,m$ be classical labels. 
        For a given state set $\state_0=\{\rho_j\}_{j=1}^m$, a collection of conditional probability distributions $\{P(x|j,i)\}_{x,j,i}$ is called an \textit{$\state_0$-quantum behavior} if there exists an observable set $\{\A_i\}_i$ such that 
        \[
        P(x|j,i) = \Tr{\rho_j \A_i(x)},
        \]
        for all $x\in X,\,i=1\ldots,n$ and $j=1,\ldots,m$. 
    \end{defi}

If Alice is allowed to access noiseless quantum communication (QC) to Bob, i.e., Alice can send just her received state $\rho_j$ to Bob without any noise, this task becomes trivial. 
In fact, Bob can reproduce an $\state_0$-quantum behavior $\{ \Tr {\rho_j \A_i(x)}\,\}_{x,j,i}$ by performing the measurement of the announced observable $\A_i$ on a state $\rho_j$.
Instead, we consider the case where only classical communication (CC) is available; i.e., Alice performs some measurement and sends Bob its outcome as a (classical) message. 
An $\state_0$-quantum behavior that can be reproduced under this constraint is called CC-realizable.
If Alice and Bob can reproduce $\state_0$-quantum behaviors that are not CC-realizable, we confirm that there is QC from Alice to Bob. Hence this task works as a QC certifier.

The following proposition establishes the connection between the ordering $\preceq_\mathrm{inc}$ and this task, showing that a ``more incompatible" observable set is a more powerful QC certifier.
The proof is similar to the proof of Theorem 2 in Ref. \cite{PhysRevA.100.042308}. 

\begin{prop} 
    Consider two observable sets $\mathbb{A} = \{\A_i\}_{i=1}^n$ and $\mathbb{B} = \{\B_i\}_{i=1}^n$ with outcome sets X and Y, respectively.  
    For a given state set $\state_0 = \{\rho_j\}_{j=1}^m$, let $\{P_{\mathbb{A}}(x|j,i)\}_{x,j,i}$ and $\{ P_{\mathbb{B}}(y|j,i)\}_{y,j,i}$ be $\state_0$-quantum behaviors such that $P_{\mathbb{A}}(x|j,i) = \Tr{\rho_j \A_i(x)}$ and $P_{\mathbb{B}}(y|j,i) = \Tr{\rho_j \B_i(y)}\,$.
    The following statements are equivalent:
    $\mathrm{(i)}$ $\mathbb{B} \preceq_\mathrm{inc} \mathbb{A}$;
    $\mathrm{(ii)}$ if the $\state_0$-quantum behavior $\{P_{\mathbb{A}}(x|j,i)\}_{x,j,i}$ is CC-realizable, then $\{ P_{\mathbb{B}}(y|j,i)\}_{y,j,i}$ is also CC-realizable for the same $\state_0$. 
\end{prop}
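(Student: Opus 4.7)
The plan is to reduce the proposition to the following single auxiliary observation that links CC-realizability directly to $\state_0$-compatibility.

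\textbf{Key observation.} For any observable set $\mathbb{A}=\{\A_i\}_i$ on $\state(\hi^{in})$ and any state set $\state_0=\{\rho_j\}_j\subset\state(\hi^{in})$, the $\state_0$-quantum behavior $\{\Tr{\rho_j\A_i(x)}\}_{x,j,i}$ is CC-realizable if and only if $\mathbb{A}$ is $\state_0$-compatible. Granting this observation, both (i) and (ii) become the implication ``for every $\state_0$, $\mathbb{A}$ being $\state_0$-compatible entails $\mathbb{B}$ being $\state_0$-compatible'', which is exactly the defining condition of $\mathbb{B}\preceq_\mathrm{inc}\mathbb{A}$. Hence the equivalence (i)$\Leftrightarrow$(ii) drops out.

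For the ($\Leftarrow$) direction of the key observation, I would exploit the $\state_0$-compatibility of $\mathbb{A}$ to obtain a compatible set $\{\tilde{\A}_i\}_i$ with joint observable $\tilde{\A}'$ on $X_1\times\cdots\times X_n$ that agrees with $\mathbb{A}$ on $\state_0$. Alice measures $\tilde{\A}'$ on $\rho_j$, sends the outcome tuple $(x_1,\ldots,x_n)$ as her classical message, and Bob simply returns $x_i$ after reading the label $i$. The distribution Bob outputs is the $i$-th marginal of $\tilde{\A}'$ evaluated at $\rho_j$, which coincides with $\Tr{\rho_j\A_i(x)}$ for every $\rho_j\in\state_0$, showing CC-realizability.

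For the ($\Rightarrow$) direction, I would first bring an arbitrary CC protocol into the canonical form where Alice performs a single POVM $\mathsf{M}=\{\mathsf{M}(k)\}_k$ (necessarily independent of $i$, since she does not receive $i$) and Bob applies a stochastic post-processing $p(x|k,i)$; any shared randomness or auxiliary randomisation can be absorbed into the outcome $k$ and the post-processing in the standard way. CC-realizability then reads
\[
\sum_k p(x|k,i)\,\Tr{\rho_j\mathsf{M}(k)}=\Tr{\rho_j\A_i(x)}
\]
for all $j$, $i$ and $x$. Defining
\[
\tilde{\A}'(x_1,\ldots,x_n):=\sum_k \bigg(\prod_{\ell=1}^n p(x_\ell|k,\ell)\bigg)\mathsf{M}(k),
\]
one gets a valid joint POVM whose $i$-th marginal is $\tilde{\A}_i(x):=\sum_k p(x|k,i)\mathsf{M}(k)$, and by the displayed identity this marginal agrees with $\A_i$ on $\state_0$. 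Thus $\mathbb{A}$ is $\state_0$-compatible.

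The main obstacle is the reduction at the start of the ($\Rightarrow$) direction: one must justify that the most general one-way CC strategy available to Alice and Bob collapses to the single-POVM-plus-post-processing form used above. This is a standard move in the literature on one-way communication with shared randomness, but needs to be stated carefully so that the subsequent construction of a joint POVM is unambiguous. Once this reduction is in place, the rest is a routine verification of marginals, and the equivalence (i)$\Leftrightarrow$(ii) is purely formal.
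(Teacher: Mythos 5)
Your proposal is correct and follows essentially the same route as the paper: the paper likewise reduces the proposition to the single equivalence ``$\{P_{\mathbb{A}}(x|j,i)\}_{x,j,i}$ is CC-realizable iff $\mathbb{A}$ is $\state_0$-compatible,'' proves the forward direction by having Alice measure the joint observable of the $\state_0$-agreeing compatible set and Bob select the $i$-th component, and proves the converse by assembling the joint POVM $\tilde{\A}'(x_1,\ldots,x_n)=\sum_z\bigl(\prod_i h(x_i|i,z)\bigr)\M(z)$ from Alice's measurement and Bob's post-processing. The reduction of a general one-way CC strategy to the single-POVM-plus-stochastic-post-processing form, which you flag as the main obstacle, is simply asserted in the paper as well.
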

\begin{proof}
    It suffices to show that the $\state_0$-compatibility of $\mathbb{A}$ is equivalent to the CC-realizability of $\{P_{\mathbb{A}}(x|j,i)\}_{x,j,i}$ for the same $\state_0=\{\rho_j\}_{j=1}^m$. Suppose that $\mathbb{A}$ is $\state_0$-compatible. 
    There is a compatible observable set $\tilde{\mathbb{A}}=\{\tilde{\A}_i\}_{i=1}^n$ such that 
    \begin{align}
    \mathrm{Tr}\big[\rho_j\tilde{\A}_i(x)\big]=\mathrm{Tr}\big[\rho_j \A_i(x)\big], \label{cond_tilde_A}
    \end{align}
    for all $x\in  X,\,i=1,\ldots,n$ and $j=1,\ldots,m$.
    Let Alice's strategy be measuring the joint observable of $\tilde{\mathbb{A}}$, then her message to Bob is the outcome $(x_1,\ldots,x_n)\in X\times \cdots \times X$.
    In addition, let Bob's strategy be selecting $x_i$ from $(x_1,\ldots,x_n)$ as his answer.
    Recall that the label $i$ is given in the second step of the task.
    By the definition of $\state_0$-compatibility and \eqref{cond_tilde_A}, this procedure reproduces the behavior $\{P_{\mathbb{A}}(x|j,i)\}_{x,j,i}$, so it is CC-realizable.
    Conversely, suppose that $\{P_{\mathbb{A}}(x|j,i)\}_{x,j,i}$ is CC-realizable for $\state_0$. 
    Then the strategies of Alice and Bob can be written as an observable $\M=\{\M(z)\}$ and a set of conditional functions $\{ h(\cdot |i,z)\}_{i,z}$, respectively, such that
    \[
    \Tr{\rho_j \A_i(x)} = \sum_{z} \Tr{\rho_j \M(z)} h(x|i,z)
    \]
    for all $x\in  X,\,i=1,\ldots,n$ and $j=1,\ldots,m$.
    Define an observable $\tilde{\A}'$ whose outcome is $(x_1,\ldots,x_n)\in X\times\cdots\times X$ by
    \[
    \tilde{\A}'(x_1,\ldots,x_n) = \sum_z \Big( \prod_i h(x_i|i,z)\Big)\M(z).
    \]
    The marginals of $\tilde{\A}'$ form a compatible observable set $\{\tilde{\A}_i\}_i$ satisfying \eqref{cond_tilde_A}, thus $\mathbb{A}$ is $\state_0$-compatible. 
\end{proof}
	
\section{Comparing incompatibility of mutually unbiased qubit observables} \label{sec_qubit}
	This section is concerned with mutually unbiased qubit observables $\mathbb{A}_\mathrm{mub}^{t}=\{\A_\mathrm{ub}^{t\vx},\A_\mathrm{ub}^{t\vy}\}$, which are typical pairs of incompatible observables.
    Recall that an unbiased qubit observable $\A_\mathrm{ub}^\va=\{\A_\mathrm{ub}^\va(\pm)\}$, associated with a vector $\va \in \mathbb{R}^3$ where $|\va| \le 1$, is defined by (see \eqref{def_UQO})
    \[
    \A_\mathrm{ub}^\va(\pm)=\frac{1}{2}(\1\pm\va\cdot\vsigma).
    \]
    We study the incompatibility of mutually unbiased qubit observables through the ordering $\preceq_\mathrm{inc}$.    

\subsection{Operational equivalence of incompatibility} \label{subsec_equivalence}
We first explore the incompatibility of mutually unbiased  qubit observables $\mathbb{A}_\mathrm{mub}^{t}=\{\A_\mathrm{ub}^{t\vx},\A_\mathrm{ub}^{t\vy}\}$ from the perspective of the equivalence relation $\sim_\mathrm{inc}$ (see Definition \ref{def_equiv}).
The main question is how the incompatibility of mutually unbiased qubit observables is characterized in terms of the families of state sets detecting their incompatibility.
To prove the claims in this subsection, it suffices to consider only the state sets $\state_0^{(R)}\subset\state$ defined in \eqref{def_S_R}.
Thus in the following, we concentrate on $\state_0^{(R)}$ rather than all possible sets $\state_0\subset\state$.

Before addressing the main question, we establish the conditions for the incompatibility of two pairs $\mathbb{A}_\mathrm{ub}=\{\A_\mathrm{ub}^{\va_1},\A_\mathrm{ub}^{\va_2}\}$ and $\mathbb{B}_\mathrm{ub}=\{\A_\mathrm{ub}^{\vb_1},\A_\mathrm{ub}^{\vb_2}\}$ of unbiased qubit observables to be equivalent under the relation $\sim_{\mathrm{inc}}$.
Since all compatible pairs are equivalent in terms of $\sim_\mathrm{inc}$, the following discussion focuses on incompatible pairs. 
In addition, for the case of $(\vb_1,\vb_2)=(\pm \va_1, \pm \va_2)$ or $(\vb_1,\vb_2)=(\pm \va_2, \pm \va_1)$, it clearly holds the $\mathbb{A}_\mathrm{ub} \sim_\mathrm{inc} \mathbb{B}_\mathrm{ub}$ because the pairs are essentially identical. 
Then we investigate whether there exist any other pairs that satisfy a (non-trivial) equivalence relation $\mathbb{A}_\mathrm{ub} \sim_\mathrm{inc} \mathbb{B}_\mathrm{ub}$. 

We start with identifying the family of state sets $\state_0^{(R)}$ for which $\mathbb{A}_\mathrm{ub}=\{\A_\mathrm{ub}^{\va_1},\A_\mathrm{ub}^{\va_2}\}$ is $\state_0^{(R)}$-compatible.
Without loss of generality, we assume that the vectors $\va_i\in \R^3~(i=1,2)$ lie in the $xy$ plane and are parametrized by 
\begin{align}
    \va_i =
    \begin{pmatrix}
        a_i \cos\alpha_i\\
        a_i \sin\alpha_i\\
        0
    \end{pmatrix}, \label{parametrize_va_i}
\end{align}
where $\alpha_i\in[0,2\pi)$.
Since $\mathbb{A}_\mathrm{ub}$ is assumed to be incompatible, it follows that (see \eqref{cond_comp_UQO})
 \begin{align} \label{det_incomp_UQO}
	|\va_1 + \va_2| + |\va_1 - \va_2| > 2.
\end{align}
We can represent the state sets $\state_0^{(R)}$ by the associated normal vector $\vn = ( \sin\theta \cos\varphi, \sin\theta \sin\varphi, \cos\theta)^{\mathsf{T}},~\varphi \in [-\pi,\pi),~\theta \in[0,\pi]$ (see \eqref{def_S_R}). Recall that the condition for $\mathbb{A}_\mathrm{ub}$ to be $\state_0^{(R)}$-compatible is given in \eqref{cond_S_R-comp} as
    \begin{align*}
        |P_R(\va_1+\va_2)|+|P_R(\va_1-\va_2)|\leq2.
    \end{align*}  
This condition is transformed into
\begin{equation}
\begin{aligned}
    &f_{\mathbb{A}_\mathrm{ub}}(\varphi,\theta)\\
    &:=L_{\mathbb{A}_\mathrm{ub}}(\varphi)\sin^4 \theta
    + M_{\mathbb{A}_\mathrm{ub}}(\varphi)\sin^2 \theta
    +N_{\mathbb{A}_\mathrm{ub}}\geq 0, \label{def_f_A}
\end{aligned}
\end{equation}
where the coefficients are given by
\begin{equation}\label{eq:coeff}
	\begin{aligned}
		&L_{\mathbb{A}_\mathrm{ub}}(\varphi):=a_1^2 a_2^2 \cos^2(\varphi -\alpha_1) \cos^2(\varphi -\alpha_2),\\
        &M_{\mathbb{A}_\mathrm{ub}}(\varphi):=a_1^2 \cos^2(\varphi -\alpha_1) +a_2^2 \cos^2(\varphi -\alpha_2)\\
		 &\quad-2 a_1^2 a_2^2 \cos \alpha_1 \cos \alpha_2 \cos(\varphi -\alpha_1) \cos(\varphi -\alpha_2),\\
		&N_{\mathbb{A}_\mathrm{ub}}:=a_1^2 a_2^2 \cos^2(\alpha_1 -\alpha_2) -a_1^2 -a_2^2 +1.
	\end{aligned}
\end{equation}
Note that we can always find a state set $\state_0^{(R)}$ for which $\mathbb{A}_\mathrm{ub}$ is $\state_0^{(R)}$-compatible: for example if we set $\state_0^{(R)}$ to the $xz$ plane of the Bloch ball ($(\varphi,\theta)=(\frac{\pi}{2},\frac{\pi}{2})$), then $\mathbb{A}_\mathrm{ub}$ is easily confirmed to satisfy \eqref{cond_S_R-comp}.
We denote by $\mathcal{C}_{\mathbb{A}_\mathrm{ub}}$ the region $(\varphi,\theta)$ where $\mathbb{A}_\mathrm{ub}$ is $\state_0^{(R)}$-compatible, i.e., inequality \eqref{def_f_A} is satisfied. The next proposition is useful for our analysis.

\begin{prop}\label{prop:same_plane}
    Let $\mathbb{A}_\mathrm{ub}=\{\A_\mathrm{ub}^{\va_1},\A_\mathrm{ub}^{\va_2}\}$ and $\mathbb{B}_\mathrm{ub}=\{\A_\mathrm{ub}^{\vb_1},\A_\mathrm{ub}^{\vb_2}\}$ be incompatible pairs of unbiased qubit observables.
    If $\mathbb{A}_\mathrm{ub} \sim_\mathrm{inc} \mathbb{B}_\mathrm{ub}$, then all $\va_1,\va_2,\vb_1,\vb_2$ are in the same 2D plane (the $xy$ plane). 
\end{prop}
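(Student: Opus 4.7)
The plan is a contradiction argument based on the geometry of the $\state_0^{(R)}$-compatibility region, viewed as a subset of the unit sphere $\{\vn\in\R^3:|\vn|=1\}$. By the remark preceding the proposition it suffices to consider only state sets $\state_0^{(R)}$, so the equivalence $\mathbb{A}_\mathrm{ub}\sim_\mathrm{inc}\mathbb{B}_\mathrm{ub}$ reduces to the set equality $\mathcal{C}_{\mathbb{A}_\mathrm{ub}}=\mathcal{C}_{\mathbb{B}_\mathrm{ub}}$ on this sphere. The degenerate case $\vb_1\parallel\vb_2$ can be dispatched immediately, since then $|\vb_1+\vb_2|+|\vb_1-\vb_2|=2\max(|\vb_1|,|\vb_2|)\le 2$, contradicting the incompatibility of $\mathbb{B}_\mathrm{ub}$. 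Hence $\vb_1$ and $\vb_2$ are linearly independent and span some $2$D plane $\Pi_\mathbb{B}$.

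The core geometric observation I would use is that for any $\mathbf{v}_1,\mathbf{v}_2\in\R^3$, the quantity $|P_R(\mathbf{v}_1+\mathbf{v}_2)|+|P_R(\mathbf{v}_1-\mathbf{v}_2)|$ as a function of $\vn$ on the sphere attains its global maximum precisely at $\vn$ perpendicular to $\mathrm{span}(\mathbf{v}_1,\mathbf{v}_2)$: $P_R$ never increases lengths and fixes $\mathbf{v}_1\pm\mathbf{v}_2$ iff both lie in $R=\vn^{\perp}$. Consequently, the maximum for $\mathbb{A}_\mathrm{ub}$, whose value $|\va_1+\va_2|+|\va_1-\va_2|$ exceeds $2$ by incompatibility, is attained (uniquely up to sign) at $\vn=\pm\hat z$; analogously the maximum for $\mathbb{B}_\mathrm{ub}$ is attained (uniquely up to sign) at the unit normals $\pm\vn_\mathbb{B}$ to $\Pi_\mathbb{B}$. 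Because each maximum exceeds $2$, its maximizer lies in the interior of the incompatible region $\mathcal{C}^{c}$, and for the specific algebraic form at hand each connected component of $\mathcal{C}^{c}$ is a topological disk whose unique innermost point (maximizing spherical distance to $\partial\mathcal{C}$) coincides with the maximizer. Since this characterization is intrinsic to the set $\mathcal{C}^{c}$, the equality $\mathcal{C}_{\mathbb{A}_\mathrm{ub}}=\mathcal{C}_{\mathbb{B}_\mathrm{ub}}$ forces $\{\pm\hat z\}=\{\pm\vn_\mathbb{B}\}$, whence $\Pi_\mathbb{B}$ is the $xy$-plane and all four vectors $\va_1,\va_2,\vb_1,\vb_2$ lie in it.

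The main obstacle I foresee is the intrinsic identification of the maximizer of the compatibility-check quantity as the deepest point of $\mathcal{C}^{c}$: one needs to rule out secondary interior critical points of the quantity and confirm the disk topology of each connected component. I would verify this using the explicit form \eqref{def_f_A}, since $f_{\mathbb{A}_\mathrm{ub}}$ is quadratic in $\sin^2\theta$ with $\varphi$-dependent coefficients and its level curves then admit a direct description that excludes pathological topologies. A potentially cleaner fallback is to exploit reflection symmetries directly: $\mathcal{C}_{\mathbb{A}_\mathrm{ub}}$ is invariant under reflection through the $xy$-plane (since $f_{\mathbb{A}_\mathrm{ub}}$ depends on $\theta$ only through $\sin^2\theta$), and by equivalence $\mathcal{C}_{\mathbb{B}_\mathrm{ub}}$ must share this symmetry; combining it with the analogous reflection through $\Pi_\mathbb{B}$ produces a non-trivial rotation preserving $\mathcal{C}_{\mathbb{B}_\mathrm{ub}}$ unless $\Pi_\mathbb{B}=xy$, and such a rotation would move $\pm\vn_\mathbb{B}$ away from itself, yielding the same contradiction.
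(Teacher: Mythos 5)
Your overall strategy is the same as the paper's: reduce $\mathbb{A}_\mathrm{ub}\sim_\mathrm{inc}\mathbb{B}_\mathrm{ub}$ to the set equality $\mathcal{C}_{\mathbb{A}_\mathrm{ub}}=\mathcal{C}_{\mathbb{B}_\mathrm{ub}}$ on the sphere of normal vectors and then argue that this region determines the plane spanned by the Bloch vectors. Your preliminary observations are correct and even add something the paper leaves implicit: the degenerate case $\vb_1\parallel\vb_2$ is cleanly excluded, and the fact that $|P_R(\vb_1+\vb_2)|+|P_R(\vb_1-\vb_2)|$ attains its global maximum over $\vn$ exactly at $\vn=\pm\vn_{\mathbb{B}}$ is right (a projection is a contraction, and it fixes both $\vb_1\pm\vb_2$ only when $\vn\perp\mathrm{span}(\vb_1,\vb_2)$).

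The gap is exactly where you flag it, and neither of your two routes closes it. The maximizer of the detection quantity is a property of the \emph{function}, not of the \emph{set} $\mathcal{C}^{c}$, and your proposed intrinsic characterization of it as the deepest point of $\mathcal{C}^{c}$ is dubious: writing the boundary as $\theta=\theta_0(\varphi)$ in polar coordinates about $\hat z$, the cap is $\{\theta<\theta_0(\varphi)\}$, and if $\theta_0$ varies strongly with $\varphi$ (an elongated cap) the point maximizing geodesic distance to the boundary need not be the pole. Your fallback has the same circularity: the composed rotation preserves the set $\mathcal{C}^{c}_{\mathbb{B}_\mathrm{ub}}$ but not the function $f_{\mathbb{B}_\mathrm{ub}}$, so concluding that it ``moves $\pm\vn_{\mathbb{B}}$ away from itself'' again presupposes that $\pm\vn_{\mathbb{B}}$ is intrinsically determined by the set. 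The fix is to use the monotonicity statements the paper isolates in Lemma \ref{prop_of_f_A} $(iii)$ and $(iv)$: along any great circle through the poles $\pm\hat z$, the set $\mathcal{C}^{c}_{\mathbb{A}_\mathrm{ub}}$ is exactly a pair of antipodal open arcs of equal angular radius $<\pi/2$ centered at $\pm\hat z$ (because $f_{\mathbb{A}_\mathrm{ub}}$ is negative at the poles, nonnegative on the equator, and strictly increasing in $\theta$ on $[0,\pi/2]$). Applying the same lemma to $\mathbb{B}_\mathrm{ub}$ in coordinates adapted to $\Pi_{\mathbb{B}}$ and restricting both sets to the single great circle containing $\hat z$ and $\vn_{\mathbb{B}}$, you get two pairs of antipodal arcs centered respectively at $\pm\hat z$ and $\pm\vn_{\mathbb{B}}$; equality of proper arcs forces the centers to coincide, hence $\Pi_{\mathbb{B}}$ is the $xy$ plane. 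This arc-structure argument is, in essence, what the paper means by its (also rather terse) assertion that the two reflection symmetries of the boundary are incompatible unless the planes coincide.
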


\noindent The proof of this proposition relies on the following properties of $f_{\mathbb{A}_\mathrm{ub}}$. 

\begin{lem} \label{prop_of_f_A}
    For an incompatible pair $\mathbb{A}_\mathrm{ub} = \{\A_\mathrm{ub}^{\va_1}, \A_\mathrm{ub}^{\va_2}\}$ of unbiased qubit observables, the function $f_{\mathbb{A}_\mathrm{ub}}$ in \eqref{def_f_A} satisfies the following claims:\\
    $(i)~f_{\mathbb{A}_\mathrm{ub}}(\varphi,\theta) = f_{\mathbb{A}_\mathrm{ub}}(\varphi,\pi-\theta)$.\\
    $(ii)~f_{\mathbb{A}_\mathrm{ub}}(\varphi,\theta) = f_{\mathbb{A}_\mathrm{ub}}(\varphi+\pi,\theta)$.\\
    $(iii)~f_{\mathbb{A}_\mathrm{ub}}(\varphi,\theta_1)<f_{\mathbb{A}_\mathrm{ub}}(\varphi,\theta_2)$ for $0\le\theta_1<\theta_2 \le\frac{\pi}{2}$ and a given $\varphi$.\\
    $(iv)~f_{\mathbb{A}_\mathrm{ub}}(\varphi,0)< 0,\, f_{\mathbb{A}_\mathrm{ub}}(\varphi,\pi)< 0$ and $f_{\mathbb{A}_\mathrm{ub}}(\varphi,\frac{\pi}{2})\geq0$ for all $\varphi$.
\end{lem}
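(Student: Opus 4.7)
Parts (i) and (ii) follow directly from inspection of the explicit form of $f_{\mathbb{A}_\mathrm{ub}}$ in \eqref{def_f_A}--\eqref{eq:coeff}. For (i), note that $f_{\mathbb{A}_\mathrm{ub}}$ depends on $\theta$ only through $\sin^2 \theta$, and $\sin^2 \theta = \sin^2(\pi - \theta)$. For (ii), each $\varphi$-dependent term in $L_{\mathbb{A}_\mathrm{ub}}(\varphi)$ and $M_{\mathbb{A}_\mathrm{ub}}(\varphi)$ appears as either $\cos^2(\varphi - \alpha_i)$ or the product $\cos(\varphi - \alpha_1)\cos(\varphi - \alpha_2)$, and each such expression is invariant under $\varphi \mapsto \varphi + \pi$ (the individual cosines change sign, but pairs cancel). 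The constant $N_{\mathbb{A}_\mathrm{ub}}$ has no $\varphi$-dependence at all.

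For (iii), I would set $u := \sin^2 \theta$ and view $f_{\mathbb{A}_\mathrm{ub}}$ as the quadratic $g(u) = L_{\mathbb{A}_\mathrm{ub}} u^2 + M_{\mathbb{A}_\mathrm{ub}} u + N_{\mathbb{A}_\mathrm{ub}}$. Since $u$ is strictly increasing in $\theta$ on $[0, \pi/2]$, it suffices to show $g$ is strictly increasing on $[0,1]$. The coefficient $L_{\mathbb{A}_\mathrm{ub}}$ is manifestly nonnegative as a product of squares. The key step is verifying $M_{\mathbb{A}_\mathrm{ub}} \ge 0$: writing $c_i := a_i \cos(\varphi - \alpha_i)$ and $\lambda := a_1 a_2 \cos \alpha_1 \cos \alpha_2$, one has $M_{\mathbb{A}_\mathrm{ub}} = c_1^2 + c_2^2 - 2 \lambda c_1 c_2$, and since $|\lambda| \le a_1 a_2 \le 1$, AM--GM gives $|2 \lambda c_1 c_2| \le c_1^2 + c_2^2$. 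Thus $g'(u) = 2 L_{\mathbb{A}_\mathrm{ub}} u + M_{\mathbb{A}_\mathrm{ub}} \ge 0$; strict positivity on $(0,1]$ reduces to excluding $L_{\mathbb{A}_\mathrm{ub}} = M_{\mathbb{A}_\mathrm{ub}} = 0$, which would force $c_1 = c_2 = 0$, hence $\alpha_1 \equiv \alpha_2 \pmod{\pi}$, making $\va_1, \va_2$ parallel and yielding $|\va_1 + \va_2| + |\va_1 - \va_2| = 2 \max(a_1, a_2) \le 2$, contradicting \eqref{det_incomp_UQO}.

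For (iv), direct substitution gives $f_{\mathbb{A}_\mathrm{ub}}(\varphi, 0) = f_{\mathbb{A}_\mathrm{ub}}(\varphi, \pi) = N_{\mathbb{A}_\mathrm{ub}}$. I would show $N_{\mathbb{A}_\mathrm{ub}} < 0$ by starting from \eqref{det_incomp_UQO}, squaring once to isolate $|\va_1+\va_2|\,|\va_1-\va_2|$, then squaring again and substituting $|\va_1 \pm \va_2|^2 = a_1^2 + a_2^2 \pm 2 a_1 a_2 \cos(\alpha_1 - \alpha_2)$; the resulting inequality simplifies exactly to $1 - a_1^2 - a_2^2 + a_1^2 a_2^2 \cos^2(\alpha_1 - \alpha_2) < 0$, i.e. $N_{\mathbb{A}_\mathrm{ub}} < 0$. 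At $\theta = \pi/2$, the cleanest argument is geometric: then $\vn = (\cos \varphi, \sin \varphi, 0)^\mathsf{T}$ lies in the $xy$ plane, so both projections $P_R \va_1$ and $P_R \va_2$ of the $xy$-plane vectors onto the plane orthogonal to $\vn$ are collinear along the line in the $xy$ plane perpendicular to $\vn$; hence $|P_R(\va_1 + \va_2)| + |P_R(\va_1 - \va_2)| = 2 \max(|P_R \va_1|, |P_R \va_2|) \le 2$, giving $f_{\mathbb{A}_\mathrm{ub}}(\varphi, \pi/2) \ge 0$ via \eqref{cond_S_R-comp}. Equivalently, one verifies algebraically $L_{\mathbb{A}_\mathrm{ub}} + M_{\mathbb{A}_\mathrm{ub}} + N_{\mathbb{A}_\mathrm{ub}} = (1 - a_1^2 \sin^2(\varphi - \alpha_1))(1 - a_2^2 \sin^2(\varphi - \alpha_2))$ using $\cos(\alpha_1 - \alpha_2) = \cos(\varphi - \alpha_1)\cos(\varphi - \alpha_2) + \sin(\varphi - \alpha_1)\sin(\varphi - \alpha_2)$; the factorization is manifestly nonnegative since $a_i \le 1$. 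The main obstacle is this trigonometric factorization in (iv); the remaining steps are routine bookkeeping once the polynomial form of $f_{\mathbb{A}_\mathrm{ub}}$ in $u = \sin^2\theta$ is exploited.
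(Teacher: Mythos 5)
Your proof is correct and follows essentially the same route as the paper's: parts (i)--(ii) by inspection, part (iii) via $L_{\mathbb{A}_\mathrm{ub}},M_{\mathbb{A}_\mathrm{ub}}\ge 0$ using the same bound $M_{\mathbb{A}_\mathrm{ub}}\ge(|a_1\cos(\varphi-\alpha_1)|-|a_2\cos(\varphi-\alpha_2)|)^2$, and part (iv) via $P_R$ reducing to the identity at $\theta=0,\pi$ and collinearity of the projections at $\theta=\pi/2$. Two minor remarks: your explicit exclusion of the degenerate case $L_{\mathbb{A}_\mathrm{ub}}=M_{\mathbb{A}_\mathrm{ub}}=0$ supplies the strictness in (iii) that the paper's proof leaves implicit (and which is needed later for the uniqueness of $X_0(\varphi)$), and your factorization $L_{\mathbb{A}_\mathrm{ub}}+M_{\mathbb{A}_\mathrm{ub}}+N_{\mathbb{A}_\mathrm{ub}}=(1-a_1^2\sin^2(\varphi-\alpha_1))(1-a_2^2\sin^2(\varphi-\alpha_2))$ holds only when the factor $\cos\alpha_1\cos\alpha_2$ in \eqref{eq:coeff} is read as $\cos(\alpha_1-\alpha_2)$ --- which rotational invariance of \eqref{cond_S_R-comp} about the $z$-axis shows is the intended coefficient --- though your primary geometric argument for $\theta=\pi/2$ is independent of this.
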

\begin{proof}
    Claims $(i)$ and $(ii)$ are proved by direct calculations. Regarding $(iii)$, we can easily confirm that $L_{\mathbb{A}_\mathrm{ub}}$ is non-negative for all $\varphi$. Furthermore, $M_{\mathbb{A}_\mathrm{ub}}$ is also non-negative for all $\varphi$ because it can be bounded as 
    \begin{align*}
        M_{\mathbb{A}_\mathrm{ub}}(\varphi) &\ge a_1^2 \cos^2(\varphi -\alpha_1) +a_2^2 \cos^2(\varphi -\alpha_2)\\
        &~~ -|2a_1^2 a_2^2 \cos\alpha_1 \cos\alpha_2 \cos(\varphi-\alpha_1) \cos(\varphi-\alpha_2)|\\
        &\ge a_1^2 \cos^2(\varphi -\alpha_1) +a_2^2 \cos^2(\varphi -\alpha_2)\\
        &~~ -|2a_1 a_2 \cos(\varphi-\alpha_1) \cos(\varphi-\alpha_2)|\\
        &=(|a_1 \cos(\varphi-\alpha_1)|-|a_2 \cos(\varphi-\alpha_2|)^2.
    \end{align*} 
     Since $\sin^2 \theta_1<\sin^2 \theta_2$ for $0\le\theta_1<\theta_2\le\frac{\pi}{2}$, combined with $L_{\mathbb{A}_\mathrm{ub}},M_{\mathbb{A}_\mathrm{ub}}\ge 0$, we conclude that $f_{\mathbb{A}_\mathrm{ub}}(\varphi,\theta_1)<f_{\mathbb{A}_\mathrm{ub}}(\varphi,\theta_2)$. 
     For $(iv)$, the first and second relations are verified because the projection $P_R$ becomes the identity for $\theta=0,\,\pi$ and the condition \eqref{det_incomp_UQO} holds.  
     In addition, for $\theta=\frac{\pi}{2}$, the value $|P_R(\va_1 +\va_2)|+|P_R(\va_1 -\va_2)|$ takes either $2|P_R \va_1|$ or $2|P_R \va_2|$, and this implies $f_{\mathbb{A}_\mathrm{ub}}(\varphi,\frac{\pi}{2}) \geq0$. 
\end{proof}

\begin{proof}[Proof of Proposition 6]
    Suppose that $\vb_1$ and $\vb_2$ span a different 2D plane from the $xy$ plane. 
    The relation $\mathbb{A}_\mathrm{ub} \sim_\mathrm{inc} \mathbb{B}_\mathrm{ub}$ implies that the region $\mathcal{C}_{\mathbb{A}_\mathrm{ub}}$ coincides with $\mathcal{C}_{\mathbb{B}_\mathrm{ub}}$. By Lemma \ref{prop_of_f_A} (i), the region $\mathcal{C}_{\mathbb{A}_\mathrm{ub}}$ is symmetric with respect to the $xy$ plane, while the region $\mathcal{C}_{\mathbb{B}_\mathrm{ub}}$ is symmetric with respect to the plane spanned by $\vb_1$ and $\vb_2$. We parametrize the normal vector $\vn_{\mathrm{B}}$ of this plane by $\varphi_\mathrm{B}\in[-\pi,\pi)$ and $\theta_\mathrm{B}\in[0,\pi]$ as
    \[
    \vn_{\mathrm{B}} =
     \begin{pmatrix}
     \sin\theta_{\mathrm{B}} \cos\varphi_{\mathrm{B}}\\
     \sin\theta_{\mathrm{B}} \sin\varphi_{\mathrm{B}}\\
     \cos\theta_{\mathrm{B}} 
     \end{pmatrix}.
    \]
    Consider a state set $\state_0^{(R)}$ satisfying $\varphi=\varphi_{\mathrm{B}}$. It is not difficult to see that the boundary of $\mathcal{C}_{\mathbb{B}_\mathrm{ub}}$, which is symmetric with respect to the plane spanned by $\vb_1$ and $\vb_2$, cannot be symmetric with respect to the $xy$ plane. Therefore, all $\va_1,\,\va_2,\,\vb_1,\,\vb_2$ are in the same plane.
\end{proof}

 Based on this argument, we assume that all $\va_1,\va_2,\vb_1, \vb_2$ are in the $xy$ plane. 
 The following proposition shows that the equivalence relation $\mathbb{A}_\mathrm{ub}\sim_\mathrm{inc}\mathbb{B}_\mathrm{ub}$ does not hold for essentially different $\mathbb{A}_\mathrm{ub}$ and $\mathbb{B}_\mathrm{ub}$ under the constraint $|\va_1|=|\va_2|$ and $|\vb_1|=|\vb_2|$. 

\begin{prop} \label{thm_equivalent}
    Let $\mathbb{A}_\mathrm{ub}=\{\A_\mathrm{ub}^{\va_1},\A_\mathrm{ub}^{\va_2}\}$ and $\mathbb{B}_\mathrm{ub}=\{\A_\mathrm{ub}^{\vb_1},\A_\mathrm{ub}^{\vb_2}\}$ be pairs of unbiased qubit observables satisfying $|\va_1|=|\va_2|=t,~|\vb_1|=|\vb_2|=u$ for some $t,u \in (\frac{1}{\sqrt{2}},1]$. If $\mathbb{A}_\mathrm{ub}\sim_\mathrm{inc}\mathbb{B}_\mathrm{ub}$, then either $(\vb_1,\vb_2)=(\pm\va_1,\pm\va_2)$ or $(\vb_1,\vb_2)=(\pm\va_2,\pm\va_1)$ holds.      
\end{prop}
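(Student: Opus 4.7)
The strategy is to analyze the compatibility boundary as one varies over the state sets $\mathcal{S}_0^{(R)}$ and exploit the rigidity that equality of these boundaries imposes on the two pairs. By Proposition \ref{prop:same_plane}, I may assume all four vectors lie in the $xy$ plane and parametrize them as $\va_i = t(\cos\alpha_i, \sin\alpha_i, 0)^{\mathsf{T}}$, $\vb_i = u(\cos\beta_i, \sin\beta_i, 0)^{\mathsf{T}}$, introducing the center angles $\sigma_A = (\alpha_1+\alpha_2)/2$, $\sigma_B = (\beta_1+\beta_2)/2$ and half-differences $\delta_A = (\alpha_1-\alpha_2)/2$, $\delta_B = (\beta_1-\beta_2)/2$. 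A direct computation of $|P_R(\va_1 \pm \va_2)|$ rewrites the $\mathcal{S}_0^{(R)}$-compatibility condition \eqref{cond_S_R-comp} as
\[
t|\cos\delta_A|\sqrt{1-\sin^2\theta\cos^2(\varphi-\sigma_A)} + t|\sin\delta_A|\sqrt{1-\sin^2\theta\sin^2(\varphi-\sigma_A)} \leq 1,
\]
and by the strict monotonicity in $\sin^2\theta$ from Lemma \ref{prop_of_f_A}, the boundary is described by a single curve $\sin^2\theta = S_{t,\delta_A}(\varphi-\sigma_A)$ with values in $(0,1]$. The hypothesis $\mathbb{A}_\mathrm{ub} \sim_\mathrm{inc} \mathbb{B}_\mathrm{ub}$ becomes the functional identity $S_{t,\delta_A}(\varphi-\sigma_A) = S_{u,\delta_B}(\varphi-\sigma_B)$ for all $\varphi$.

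Next I would constrain $\sigma_A-\sigma_B$ from the symmetries of $S_{t,\delta}(\psi)$. Evenness in $\psi$ combined with period $\pi$ yields axes of reflection at $\psi \equiv 0, \pi/2 \pmod \pi$, so matching these axes for the two curves forces $\sigma_A-\sigma_B \in \{0, \pi/2\} \pmod \pi$. The identity $S_{u,\delta_B}(\psi+\pi/2) = S_{u,\pi/2-\delta_B}(\psi)$ reduces the $\pi/2$-shift case to $\sigma_A = \sigma_B$ after replacing $\delta_B$ by $\pi/2-\delta_B$. With $\sigma_A = \sigma_B$ in hand, I would evaluate the defining equation at $\psi = 0, \pi/4, \pi/2$: writing $P := t|\cos\delta_A|-u|\cos\delta_B|$ and $Q := t|\sin\delta_A|-u|\sin\delta_B|$, the three evaluations give $P + Q = 0$, $P\sqrt{1-S(0)} + Q = 0$, and $P + Q\sqrt{1-S(\pi/2)} = 0$. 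Since strict incompatibility implies $S(0), S(\pi/2) \in (0,1)$, this linear system forces $P = Q = 0$; squaring and summing then yields $t = u$ and $|\cos\delta_A| = |\cos\delta_B|$, $|\sin\delta_A| = |\sin\delta_B|$, from which the conclusion $(\vb_1, \vb_2) \in \{(\pm\va_1,\pm\va_2), (\pm\va_2,\pm\va_1)\}$ follows.

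The main obstacle is the mutually unbiased case $\delta_A = \delta_B = \pi/4$, where $S_{t,\pi/4}$ gains the additional period $\pi/2$, so axis matching alone admits the spurious candidates $\sigma_A-\sigma_B \in \{\pi/4, 3\pi/4\} \pmod \pi$. Excluding these requires a separate comparison of $S$ at $\psi = 0$ and $\psi = \pi/4$, which produces a constraint of the form $h(t) + h(u) = 0$ for the function $h(x) := 2\sqrt{2}/x - 1/x^2 - 2$. Since $h$ attains its maximum value $0$ at $x = 1/\sqrt{2}$ and is strictly negative elsewhere on $(0,\infty)$, the assumption $t, u \in (1/\sqrt{2}, 1]$ yields a contradiction---this is precisely where the lower bound $1/\sqrt{2}$ in the hypothesis enters.
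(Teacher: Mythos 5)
Your proof is correct and follows the same overall strategy as the paper's: restrict attention to the state sets $\state_0^{(R)}$, observe that $\sim_\mathrm{inc}$ forces the boundary curves of the compatibility regions to coincide, align the ``center'' angles, and then evaluate the coincidence identity at the center angle and at a $\pi/4$ offset to obtain the linear system forcing $t=u$ and equal angle differences (your $P=Q=0$ is literally the paper's pair of equations \eqref{simul_eq}). The genuine differences are in the alignment step and its aftermath. The paper locates the \emph{unique minimizer} of the boundary curve $X_0(\varphi)$ at $\varphi=\omega_{\va_1+\va_2}$ (using $|\va_1+\va_2|\ge|\va_1-\va_2|$), which pins down $\omega_{\vb_1+\vb_2}=\omega_{\va_1+\va_2}$ directly; you instead match \emph{reflection axes}, which is weaker and admits the spurious offsets $\sigma_A-\sigma_B\in\{\pi/4,3\pi/4\}$ when both pairs are mutually unbiased (minimal period $\pi/2$). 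Your exclusion of that degenerate case is correct: summing the two evaluations at $\psi=0$ and $\psi=\pi/4$ does give $h(t)+h(u)=0$ with $h(x)=2\sqrt{2}/x-1/x^2-2$, and $h$ is strictly negative on $(1/\sqrt{2},1]$, so the hypothesis $t,u>1/\sqrt{2}$ kills it --- this is extra work the paper's minimum-based argument avoids, but it is a nice explicit use of the lower bound in the hypothesis. Conversely, your version buys self-containedness: since you obtain $\sigma_A=\sigma_B$ (after the $\delta_B\mapsto\pi/2-\delta_B$ reduction) together with $t=u$ and $|\cos\delta_A|=|\cos\delta_B|$, $|\sin\delta_A|=|\sin\delta_B|$, you conclude directly, whereas the paper routes the final step through the separate argument of Lemma \ref{lem_equiv}. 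Two small points you should make explicit if you write this up: (i) that the reflection axes of a non-constant $\pi$-periodic even function necessarily lie in $\tfrac{\pi}{2}\Z$ (so axis-matching really does force $\sigma_A-\sigma_B\in\{0,\pi/2\}$ outside the MUB case, and minimal period $\pi/2$ forces $|\cos\delta|=|\sin\delta|$); and (ii) that $S(0)<1$ because $S(0)=1$ would require $t|\sin\delta_A|=1$, i.e.\ $\va_2=-\va_1$, which is a compatible pair and hence excluded.
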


\noindent 
Before proving this proposition, we first study a more constrained case where $t=u$ and the angle between $\va_1$ and $\va_2$ is equal to the angle between $\vb_1$ and $\vb_2$.

\begin{lem}\label{lem_equiv}
     Let $\mathbb{A}_\mathrm{ub}=\{\A_\mathrm{ub}^{\va_1},\A_\mathrm{ub}^{\va_2}\}$ and $\mathbb{B}_\mathrm{ub}=\{\A_\mathrm{ub}^{\vb_1},\A_\mathrm{ub}^{\vb_2}\}$ 
     satisfy $|\va_1|=|\va_2|=|\vb_1|=|\vb_2|=t~ (t\in(\frac{1}{\sqrt{2}},1])$, and let the angle between $\va_1$ and $\va_2$ be equal to the angle between $\vb_1$ and $\vb_2$. The relation $\mathbb{A}_\mathrm{ub}\sim_\mathrm{inc}\mathbb{B}_\mathrm{ub}$ implies either $(\vb_1,\vb_2)=(\pm\va_1,\pm\va_2)$ or $(\vb_1,\vb_2)=(\pm\va_2,\pm\va_1)$. 
\end{lem}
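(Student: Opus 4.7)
The plan is to exploit the equal-magnitude hypothesis so that the equivalence $\mathbb{A}_\mathrm{ub}\sim_\mathrm{inc}\mathbb{B}_\mathrm{ub}$ becomes a translational symmetry of the compatibility region of $\mathbb{A}_\mathrm{ub}$ in the angular variable $\varphi$, and then to pin down the allowed translations by computing the minimum period of this region.

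First I would invoke Proposition~\ref{prop:same_plane} to put all four vectors in the $xy$-plane, and rotate about the $z$-axis so that the bisector of $\{\va_1,\va_2\}$ is the positive $x$-axis, giving $\va_1=t(\cos\alpha,\sin\alpha,0)$ and $\va_2=t(\cos\alpha,-\sin\alpha,0)$ for some $\alpha\in(0,\pi/2]$. Since the angle between $\vb_1$ and $\vb_2$ equals that between $\va_1$ and $\va_2$, the pair $\mathbb{B}_\mathrm{ub}$ admits the parametrization $\vb_1=t(\cos(\omega+\alpha),\sin(\omega+\alpha),0)$, $\vb_2=t(\cos(\omega-\alpha),\sin(\omega-\alpha),0)$ for some $\omega\in[0,2\pi)$; the freedom to swap $\vb_1\leftrightarrow\vb_2$ is absorbed by $\alpha\mapsto-\alpha$. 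Because $\mathbb{B}_\mathrm{ub}$ is the rotation of $\mathbb{A}_\mathrm{ub}$ by $\omega$ about the $z$-axis, the identity $\mathcal{C}_{\mathbb{B}_\mathrm{ub}}(\varphi,\theta)=\mathcal{C}_{\mathbb{A}_\mathrm{ub}}(\varphi-\omega,\theta)$ turns $\mathbb{A}_\mathrm{ub}\sim_\mathrm{inc}\mathbb{B}_\mathrm{ub}$ into the statement that $\omega$ is a period of $\mathcal{C}_{\mathbb{A}_\mathrm{ub}}$ in $\varphi$.

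The main technical step is to determine this minimum period. The mirror symmetries of $\mathbb{A}_\mathrm{ub}$ about the $x$- and $y$-axes already yield $\pi$ as a period, and when $\alpha=\pi/4$ the additional symmetry $\va_1\mapsto\va_2,\,\va_2\mapsto-\va_1$ under rotation by $\pi/2$ yields $\pi/2$ as a period. To show these are minimal, I would work with the explicit boundary function
\[
g(\varphi,\theta)=|P_R(\va_1+\va_2)|+|P_R(\va_1-\va_2)|=2t\cos\alpha\,F(\varphi)+2t\sin\alpha\,F(\varphi-\tfrac{\pi}{2}),
\]
where $F(\varphi):=\sqrt{1-\sin^2\theta\cos^2\varphi}$, and expand $F(\varphi)=\sum_{n\ge 0}A_n\cos(2n\varphi)$ at fixed $\theta\in(0,\pi)$. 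Since the shift by $\pi/2$ multiplies $A_n$ by $(-1)^n$, the coefficient of $\cos(2n\varphi)$ in $g$ equals $2tA_n[\cos\alpha+(-1)^n\sin\alpha]$. For $\alpha\neq\pi/4$ the $n=1$ coefficient is $2tA_1(\cos\alpha-\sin\alpha)\neq 0$ (with $A_1$ a sign-definite cosine-weighted integral of $F$), so the minimum period of $g$ is exactly $\pi$; for $\alpha=\pi/4$ all odd-$n$ terms cancel but $A_2(\cos\alpha+\sin\alpha)\neq 0$, forcing the minimum period to be exactly $\pi/2$. Passage from periodicity of the 2D region $\mathcal{C}_{\mathbb{A}_\mathrm{ub}}$ to periodicity of $g$ itself is legitimate because $g$ is real-analytic in $\sin^2\theta$ and the boundary curve $s^\ast(\varphi)$ defined by $g(\varphi,\theta)=2$ is non-constant.

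Putting it together, $\omega$ must be a multiple of the minimum period. For $\alpha\neq\pi/4$ this forces $\omega\in\{0,\pi\}\pmod{2\pi}$ and hence $(\vb_1,\vb_2)=(\pm\va_1,\pm\va_2)$; for $\alpha=\pi/4$, $\omega\in\{0,\pi/2,\pi,3\pi/2\}$, and direct substitution into the parametrization shows the two extra cases give $(\vb_1,\vb_2)=(-\va_2,\va_1)$ and $(\va_2,-\va_1)$, both lying in $\{(\pm\va_2,\pm\va_1)\}$. The most delicate point, I expect, is the Fourier nonvanishing claim: one must verify that $A_1$ (for $\alpha\neq\pi/4$) and $A_2$ (for $\alpha=\pi/4$) are nonzero for every $t\in(1/\sqrt{2},1]$ and every relevant $\theta$, and that the reduction from periodicity of $\mathcal{C}_{\mathbb{A}_\mathrm{ub}}$ to that of $g$ goes through without encountering exceptional parameters on which $s^\ast$ collapses to a constant.
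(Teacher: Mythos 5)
Your overall strategy --- realizing $\mathbb{B}_\mathrm{ub}$ as the rotation of $\mathbb{A}_\mathrm{ub}$ by an angle $\omega$ about the $z$-axis, so that $\mathbb{A}_\mathrm{ub}\sim_\mathrm{inc}\mathbb{B}_\mathrm{ub}$ forces $\omega$ to be a period of the compatibility region $\mathcal{C}_{\mathbb{A}_\mathrm{ub}}$, and then computing the minimal period --- is genuinely different from the paper's proof, which instead compares the two quadratics $LX^2+MX+N=0$ sharing the root $X_0(\varphi)$ at the single angle where the leading coefficients coincide and then finishes with an auxiliary pair $\{\A_\mathrm{ub}^{-\vb_1},\A_\mathrm{ub}^{\va_1}\}$. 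However, your argument has a genuine gap exactly at the step you flag as delicate. What the equivalence gives you is that the \emph{region} $\{(\varphi,X): g(\varphi,X)\le 2\}$ is $\omega$-periodic; since $g$ is strictly monotone in $X=\sin^2\theta$, this is the same as saying that the boundary curve $X_0(\varphi)$ is $\omega$-periodic. It does \emph{not} follow that $g(\varphi+\omega,X)=g(\varphi,X)$ for all $X$: periodicity of a sublevel set only constrains $g$ along the curve $X=X_0(\varphi)$, i.e.\ at one value of $X$ per $\varphi$, and two functions that are both decreasing in $X$ can agree along a non-constant curve without being equal. Real-analyticity in $\sin^2\theta$ does not repair this, because the two bivariate functions $g(\varphi,X)$ and $g(\varphi+\omega,X)$ are only known to agree on a one-dimensional set with no accumulation in the $X$-direction at fixed $\varphi$. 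Your Fourier computation therefore determines the minimal period of the wrong object; the entire content of the lemma is precisely that two essentially different pairs cannot share the same boundary curve, so one may not assume the functions coincide off that curve.

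The gap is repairable within your framework, but it needs an argument about $X_0$ itself rather than about $g$ at fixed $\theta$. Subtracting the two quadratics, which share the constant term $N$, yields $\big(L(\varphi+\omega)-L(\varphi)\big)X_0(\varphi)=-\big(M(\varphi+\omega)-M(\varphi)\big)$; this is the true consequence of region-periodicity, and it is this relation that must be shown to force $\omega\in\pi\Z$ (respectively $\tfrac{\pi}{2}\Z$ when $\va_1\perp\va_2$). Alternatively, one can show that $X_0$ attains its minimum on $[0,\pi)$ only at $\varphi=\omega_{\va_1+\va_2}$ (the derivative computation used in the paper's proof of Proposition~\ref{thm_equivalent}), which bounds the minimal period of the region from below directly. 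The remaining parts of your proposal --- the reduction to the $xy$-plane via Proposition~\ref{prop:same_plane}, the parametrization absorbing the swap into $\alpha\mapsto-\alpha$, and the final case analysis showing that $\omega=\pi/2,\,3\pi/2$ in the orthogonal case lands in $(\pm\va_2,\pm\va_1)$ --- are correct, but as written the central step is not justified.
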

\begin{proof}
    Suppose that $\mathbb{A}_\mathrm{ub}$ and $\mathbb{B}_\mathrm{ub}$ satisfy $(\vb_1,\vb_2)\neq(\pm\va_1,\pm\va_2)$ and $(\vb_1,\vb_2)\neq(\pm\va_2,\pm\va_1)$.
    The vectors $\va_1,\va_2,\vb_1$ and $\vb_2$ are parametrized similarly to \eqref{parametrize_va_i} by
    \[
    \va_i=
    \begin{pmatrix}
        a_i \cos\alpha_i\\
        a_i \sin\alpha_i\\
        0
    \end{pmatrix}, ~
    \vb_i=
    \begin{pmatrix}
        b_i \cos\beta_i\\
        b_i \sin\beta_i\\
        0
    \end{pmatrix}
    \quad (i=1,2),
    \]
    where $\alpha_i,\beta_i\in[0,2\pi)$.
    Without loss of generality, we can set $\alpha_1,\alpha_2,\beta_1$ and $\beta_2$ as $\alpha_1=0,\alpha_2=\omega,\beta_1=\psi$
    and $\beta_2=\psi+\omega$ with $\omega \in(0,\frac{\pi}{2}]$ and $\psi \in(0,\pi)$. Then the coefficients of $f_{\mathbb{A}_\mathrm{ub}}$ and $f_{\mathbb{B}_\mathrm{ub}}$ defined in \eqref{eq:coeff} become
    \begin{align*}
        &L_{\mathbb{A}_\mathrm{ub}}(\varphi) = t^4 \cos^2\varphi \cos^2(\varphi -\omega), \\
        &M_{\mathbb{A}_\mathrm{ub}}(\varphi) = t^2 \cos^2 \varphi + t^2 \cos^2(\varphi -\omega)\\
        &\hspace{45pt}-2t^4 \cos\omega \cos\varphi \cos(\varphi-\omega), \\
        &L_{\mathbb{B}_\mathrm{ub}}(\varphi) = t^4 \cos^2(\varphi-\psi) \cos^2(\varphi-\psi-\omega),\\
        &M_{\mathbb{B}_\mathrm{ub}}(\varphi) = t^2 \cos^2(\varphi-\psi)+t^2\cos^2(\varphi-\psi-\omega)\\
        &\hspace{30pt} -2t^4\cos\psi \cos(\psi+\omega)\cos(\varphi-\psi)\cos(\varphi-\psi-\omega),\\
        &N_{\mathbb{A}_\mathrm{ub}} = N_{\mathbb{B}_\mathrm{ub}} =t^4\cos^2\omega -2t^2 +1 =: N.
    \end{align*}
    Recall that the relation $\mathbb{A}_\mathrm{ub}\sim_\mathrm{inc}\mathbb{B}_\mathrm{ub}$ implies $\mathcal{C}_{\mathbb{A}_\mathrm{ub}}=\mathcal{C}_{\mathbb{B}_\mathrm{ub}}$. This, combined with the monotonicity proved in Lemma \ref{prop_of_f_A} $(iii)$, ensures that for each $\varphi \in [0,\pi]$ there exists a unique positive simultaneous solution $X_0(\varphi)\in(0,1]$ of the following quadratic equations:
    \begin{equation}\label{eq:equiv_quadratic}
      \begin{aligned}
        L_{\mathbb{A}_\mathrm{ub}}(\varphi)X_0(\varphi)^2 +M_{\mathbb{A}_\mathrm{ub}}(\varphi)X_0(\varphi)+N&=0,\\
         L_{\mathbb{B}_\mathrm{ub}}(\varphi)X_0(\varphi)^2 +M_{\mathbb{B}_\mathrm{ub}}(\varphi)X_0(\varphi)+N&=0.
      \end{aligned}
    \end{equation}
    Consider the specific case $\varphi = \frac{\psi +\omega}{2}$, where $L_{\mathbb{A}_\mathrm{ub}}(\frac{\psi+\omega}{2}) = L_{\mathbb{B}_\mathrm{ub}}(\frac{\psi+\omega}{2})$ holds. 
    For the equations in \eqref{eq:equiv_quadratic} to have a simultaneous solution, the other coefficients must also be equal, i.e., $M_{\mathbb{A}_\mathrm{ub}}(\frac{\psi+\omega}{2})=M_{\mathbb{B}_\mathrm{ub}}(\frac{\psi+\omega}{2})$.
    This can be rewritten as
    \[
    \cos\frac{\psi+\omega}{2} \cos\frac{\psi-\omega}{2} \sin\psi \sin(\psi+\omega)=0.
    \]
    Due to the constraint on $\psi$ and $\omega$, we have
    \begin{align}
        (\beta_2=)~\psi + \omega =\pi. \label{psi_plus_omega}
    \end{align}
    This condition is illustrated in FIG. \ref{fig3}.
    We apply the same argument to the case of $\{\A_\mathrm{ub}^{-\vb_1},\A_\mathrm{ub}^{\va_1}\} \sim_\mathrm{inc} \{\A_\mathrm{ub}^{\vc_1},\A_\mathrm{ub}^{\vc_2}\}$, where $|\vb_1|=|\va_1|=|\vc_1|=|\vc_2|=t$ and the angle between $-\vb_1$ and $\va_1$ is equal to the angle between $\vc_1$ and $\vc_2$.
    Then it follows that $\vc_1=(t \cos(\pi-2\omega),t\sin(\pi-2\omega),0)^\mathsf{T}$ and $\vc_2=\vb_1$ (see FIG. \ref{fig3}). 
    Since the relation $\mathbb{B}_\mathrm{ub}\sim_\mathrm{inc}\{\A_\mathrm{ub}^{-\vb_1},\A_\mathrm{ub}^{\va_1}\}$ holds trivially, we deduce $\mathbb{A}_\mathrm{ub}\sim_\mathrm{inc}\{\A_\mathrm{ub}^{\vc_1},\A_\mathrm{ub}^{\vb_1}\}$.
    However, this contradicts \eqref{psi_plus_omega} unless $\omega = \frac{\pi}{2}$. If $\omega = \frac{\pi}{2}$, the relation $\mathbb{A}_\mathrm{ub}\sim_\mathrm{inc}\{\A_\mathrm{ub}^{\vc_1},\A_\mathrm{ub}^{\vb_1}\}$ becomes $\mathbb{A}_\mathrm{ub}\sim_\mathrm{inc}\mathbb{A}_\mathrm{ub}$, which is trivial. 
\end{proof}    

\begin{figure}
    \centering
    \includegraphics[width=0.7\linewidth]{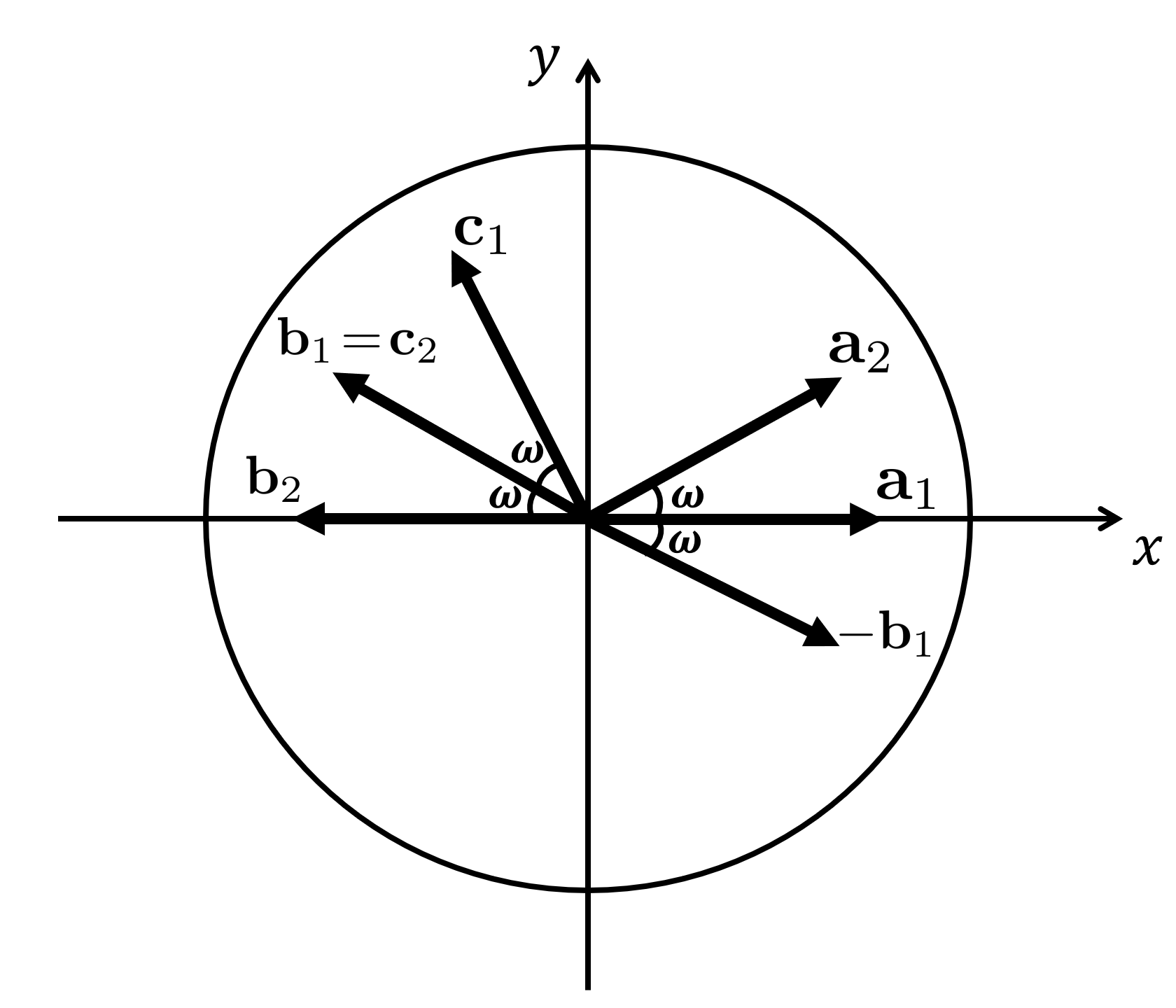}
    \caption{The vectors $\va_1,\va_2,\pm\vb_1,\vb_2,\vc_1$ and $\vc_2$ in the $xy$ plane of the Bloch ball under the condition of \eqref{psi_plus_omega}.}
    \label{fig3}
\end{figure}

\noindent
For the subsequent analysis, it is convenient to introduce an alternative representation of the condition \eqref{def_f_A}:
\begin{align}
    2 &- |\va_1 +\va_2| \sqrt{1 -\sin^2\theta \cos^2(\varphi -\omega_{\va_1 +\va_2})} \notag \\
    &-|\va_1 -\va_2| \sqrt{1-\sin^2\theta \cos^2(\varphi -\omega_{\va_1 -\va_2})} \geq0, \label{def_g_A}
\end{align}
where $\omega_{\va_1 \pm\va_2}$ is the angle between $\va_1\pm\va_2$ and the $x$ axis. This equivalency follows since both conditions are transformed from \eqref{cond_S_R-comp}. 
We denote the left-hand side of \eqref{def_g_A} by $g_{\mathbb{A}_\mathrm{ub}}$ with the substitution $X=\sin^2\theta$, i.e., 
\begin{align*}
 g_{\mathbb{A}_\mathrm{ub}}(\varphi,X) :=\, 2 &- |\va_1 +\va_2| \sqrt{1 -X \cos^2(\varphi -\omega_{\va_1 +\va_2})} \\
    &-|\va_1 -\va_2| \sqrt{1-X \cos^2(\varphi -\omega_{\va_1 -\va_2})}.
\end{align*}
This function inherits the properties of $f_{\mathbb{A}_\mathrm{ub}}$ detailed in Lemma \ref{prop_of_f_A}. Thus we can set $\varphi\in[0,\pi)$ and $X\in[0,1]$

\begin{proof}[Proof of Proposition 7]
    Without loss of generality, we can assume $\alpha_1=0,~\omega_{\va_1+\va_2}\in(0,\frac{\pi}{2})$ and $|\va_1 +\va_2|\geq |\va_1 -\va_2|$ by replacing $\va_2$ with $-\va_2$ if necessary. 
    The core of the proof is again that the boundaries of $\mathcal{C}_{\mathbb{A}_\mathrm{ub}}$ and $\mathcal{C}_{\mathbb{B}_\mathrm{ub}}$ must coincide.
    The assignment of $\mathcal{C}_{\mathbb{A}_\mathrm{ub}}=\mathcal{C}_{\mathbb{B}_\mathrm{ub}}$ led to \eqref{eq:equiv_quadratic} in the proof of Lemma \ref{lem_equiv}. 
    Rephrasing the argument to this case, we obtain the identity
    \begin{align}
         g_{\mathbb{A}_\mathrm{ub}}(\varphi,X_0(\varphi))=g_{\mathbb{B}_\mathrm{ub}}(\varphi,X_0(\varphi))=0. \label{eq:g_A=g_B}
    \end{align}
    For every fixed $X \in(0,1]$, a direct calculation of the partial derivative $\partial g_{\mathbb{A}_\mathrm{ub}}(\varphi,X)/\partial \varphi$ shows that $g_{\mathbb{A}_\mathrm{ub}}(\varphi,X)$ has a maximum at $\varphi=\omega_{\va_1+\va_2}$.
    This fact, combined with the monotonicity of $g_{\mathbb{A}_\mathrm{ub}}(\varphi,X)$ in $X$, implies that the function $X_0(\varphi)$ is minimized at $\varphi=\omega_{\va_1+\va_2}$. 
    The same argument for $\mathbb{B}_\mathrm{ub}$ shows that $X_0(\varphi)$ is also minimized at $\varphi=\omega_{\vb_1+\vb_2}$. 
    Since $\mathcal{C}_{\mathbb{A}_\mathrm{ub}}=\mathcal{C}_{\mathbb{B}_\mathrm{ub}}$, the above discussion leads to $\omega_{\vb_1+\vb_2}=\omega_{\va_1+\va_2}$.
    
    To determine the magnitude of vectors $\va_1,\,\va_2,\,\vb_1$ and $\vb_2$, we evaluate the identity \eqref{eq:g_A=g_B} at two specific angles $\varphi=\omega_{\va_1+\va_2}$ and $\varphi=\omega_{\va_1+\va_2}+\frac{\pi}{4}$. We then obtain the equations
    \begin{align*}
       &t\sin\frac{|\alpha_1-\alpha_2|}{2} -u \sin\frac{|\beta_1-\beta_2|}{2} \\
       &=\! \sqrt{1 \!-\! X_0(\omega_{\va_1+\va_2})} \Big( u \cos\frac{|\beta_1 \!-\! \beta_2|}{2} \!-\! t \cos\frac{|\alpha_1 \!-\! \alpha_2|}{2} \Big) 
    \end{align*}
    and
   \begin{align*}
        t\sin\frac{|\alpha_1-\alpha_2|}{2} &-u \sin\frac{|\beta_1-\beta_2|}{2} \\
        &=u \cos\frac{|\beta_1-\beta_2|}{2} -t \cos\frac{|\alpha_1-\alpha_2|}{2}. 
    \end{align*}
    Since $\mathbb{A}_\mathrm{ub}$ is incompatible, we have $0<X_0(\omega_{\va_1+\va_2})<1$ and thus $\sqrt{1-X_0(\omega_{\va_1+\va_2})} \neq 1$. For the two equations to hold simultaneously, both sides of the equations must be zero. This leads to
    \begin{align}
    \begin{cases}
        u \sin\frac{|\alpha_1-\alpha_2|}{2} -t \sin\frac{|\beta_1-\beta_2|}{2}=0, \\
        u \cos\frac{|\beta_1-\beta_2|}{2} -t \cos\frac{|\alpha_1-\alpha_2|}{2}=0.
    \end{cases} \label{simul_eq}
    \end{align}
    These imply $|\alpha_1-\alpha_2|=|\beta_1-\beta_2|$, and consequently $t=u$. Therefore, the proposition follows from Lemma \ref{lem_equiv}. 
\end{proof}

So far we have established that under the constraint of $|\va_1|=|\va_2|$ and $|\vb_1|=|\vb_2|$, the incompatibility of $\mathbb{A}_\mathrm{ub}=\{\A_\mathrm{ub}^{\va_1},\A_\mathrm{ub}^{\va_2}\}$ and $\mathbb{B}_\mathrm{ub}=\{\A_\mathrm{ub}^{\vb_1},\A_\mathrm{ub}^{\vb_2}\}$ cannot be equivalent except for the trivial cases $(\vb_1,\vb_2)=(\pm\va_1,\pm\va_2)$ or $(\vb_1,\vb_2)=(\pm\va_2,\pm\va_1)$.
Using this result, we now consider the significant case where $\mathbb{A}_\mathrm{ub}$ is mutually unbiased, namely $\mathbb{A}_\mathrm{mub}^t=\{\A_\mathrm{ub}^{t\vx},\A_\mathrm{ub}^{t\vy}\}$ (see Example \ref{eg:incomp_obs}). 
We can prove the same statement without the constraint of $|\vb_1|=|\vb_2|$.

\begin{thm} \label{thm_equiv_txty}
    For mutually unbiased qubit observables $\mathbb{A}_\mathrm{mub}^{t}=\{ \A_\mathrm{ub}^{t\vx},\A_\mathrm{ub}^{t\vy} \}$ and unbiased qubit observables $\mathbb{B}_\mathrm{ub}=\{ \A_\mathrm{ub}^{\vb_1},\A_\mathrm{ub}^{\vb_2} \}$, the relation $\mathbb{A}_\mathrm{mub}^t \sim_\mathrm{inc} \mathbb{B}_\mathrm{ub}$ implies either $(\vb_1,\vb_2)=(\pm t\vx,\pm t\vy)$ or $(\vb_1,\vb_2)=(\pm t\vy,\pm t\vx)$.
\end{thm}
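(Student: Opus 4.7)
The plan is to reduce the theorem to Proposition 7 by first exploiting the extra symmetry enjoyed by the MUB case $\mathbb{A}_\mathrm{mub}^t$ to force $|\vb_1| = |\vb_2|$; once this constraint is established, Proposition 7 applies directly and yields the conclusion. The preparatory step is to invoke Proposition 6 to place $\vb_1, \vb_2$ in the $xy$ plane, reducing everything to the two-parameter $(\varphi, X)$ picture used earlier.

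The key observation is that for $\mathbb{A}_\mathrm{mub}^t$ one has $|\va_1 + \va_2| = |\va_1 - \va_2| = t\sqrt{2}$ with $\omega_{\va_1 + \va_2} = \pi/4$ and $\omega_{\va_1 - \va_2} = -\pi/4$. Applying the identity $\cos^2(\varphi - \pi/4) + \cos^2(\varphi + \pi/4) = 1$ I can rewrite $g_{\mathbb{A}_\mathrm{mub}^t}(\varphi, X) = 2 - t\sqrt{2}[\sqrt{1-Y} + \sqrt{1-X+Y}]$ with $Y := X\cos^2(\varphi - \pi/4) \in [0, X]$. Concavity of the bracket in $Y$ places its minimum at the two endpoints $Y = 0$ and $Y = X$, so $g_{\mathbb{A}_\mathrm{mub}^t}$ attains its maximum in $\varphi$ at \emph{both} $\varphi = \pi/4$ and $\varphi = 3\pi/4$. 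Consequently $X_0^{\mathbb{A}_\mathrm{mub}^t}(\varphi)$ has two distinct minima per period, located at these two angles, and the equivalence $\mathbb{A}_\mathrm{mub}^t \sim_\mathrm{inc} \mathbb{B}_\mathrm{ub}$ forces $X_0^{\mathbb{B}_\mathrm{ub}}$ to exhibit the same two minima.

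To extract orthogonality, I would invoke the extremum analysis from the proof of Proposition 7: when $|\vb_1 + \vb_2| \neq |\vb_1 - \vb_2|$, the function $X_0^{\mathbb{B}_\mathrm{ub}}$ has a unique minimum, located at the angle of the longer of the two vectors $\vb_1 \pm \vb_2$; the existence of two minima therefore forces $|\vb_1 + \vb_2| = |\vb_1 - \vb_2|$, i.e. $\vb_1 \cdot \vb_2 = 0$. With orthogonality in hand the same concavity trick applies to $g_{\mathbb{B}_\mathrm{ub}}$, placing its maxima at $\omega_{\vb_1 + \vb_2}$ and $\omega_{\vb_1 - \vb_2}$, which (being $\pi/2$ apart modulo $\pi$) must coincide as an unordered pair with $\{\pi/4, 3\pi/4\}$. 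Writing $\vb_1, \vb_2$ in complex form as $u_1 e^{i\beta}$ and $u_2 e^{i(\beta + \pi/2)}$, the identities $\vb_1 \pm \vb_2 = (u_1 \mp iu_2)e^{i\beta}$ give $\omega_{\vb_1 \pm \vb_2} = \beta \pm \arctan(u_2/u_1)$, and matching these two angles with $\{\pi/4, -\pi/4\} \bmod \pi$ forces $\arctan(u_2/u_1) = \pi/4$, hence $u_1 = u_2$. Thus $|\vb_1| = |\vb_2|$, and Proposition 7 applies to finish the proof.

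The main obstacle will be the uniqueness-of-minimum claim: rigorously verifying that $|\vb_1 + \vb_2| \neq |\vb_1 - \vb_2|$ precludes more than one global minimum of $X_0^{\mathbb{B}_\mathrm{ub}}$ per period of $\varphi$ requires a careful case analysis of the critical points of $g_{\mathbb{B}_\mathrm{ub}}$ of the kind alluded to in the proof of Proposition 7, and without this the chain collapses. Everything else --- the concavity argument in the MUB case, the angle-matching step, and the final appeal to Proposition 7 --- is essentially bookkeeping.
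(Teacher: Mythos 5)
Your overall strategy (locate the extrema of $g$ in $\varphi$, use the fact that the MUB pair has \emph{two} maxima per period at $\varphi=\pi/4,\,3\pi/4$, and force $\mathbb{B}_\mathrm{ub}$ to reproduce them) is genuinely different from the paper's route, which instead conjugates by the unitary swapping $\sigma_x\leftrightarrow\sigma_y$ to produce a third equivalent pair $\overline{\mathbb{B}}_\mathrm{ub}$ (Lemma \ref{lem_txty}) and then extracts $\beta_1+\beta_2=\pi/2$ or $3\pi/2$ and $b_1=b_2$ algebraically from the three simultaneous quadratics \eqref{eq:X(phi)_A}--\eqref{eq:X(phi)_B}. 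However, as written your argument has two genuine holes, and only one of them is the one you flag.

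The first is the one you acknowledge: the claim that $c_+:=|\vb_1+\vb_2|\neq|\vb_1-\vb_2|=:c_-$ forces a \emph{unique} global minimum of $X_0^{\mathbb{B}_\mathrm{ub}}$. This does not follow from "the extremum analysis in the proof of Proposition \ref{thm_equivalent}," because that analysis rests on $|\vb_1|=|\vb_2|$, which makes $\vb_1+\vb_2\perp\vb_1-\vb_2$ and hence $\cos^2(\varphi-\omega_{\vb_1+\vb_2})+\cos^2(\varphi-\omega_{\vb_1-\vb_2})=1$; only then does $g$ reduce to a convex function of a single variable $s\in[0,1]$ with its maximum pinned to an endpoint. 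For general $\vb_1,\vb_2$ the two squared cosines are not complementary (the pair $(s_+,s_-)$ traces an ellipse rather than a line segment), critical points off the symmetry axes of that ellipse can occur, and ruling out two symmetric global minima requires a new computation you have not supplied. The second hole you do not flag: once you have $\vb_1\cdot\vb_2=0$, the "same concavity trick" does \emph{not} apply to $g_{\mathbb{B}_\mathrm{ub}}$ unless $|\vb_1|=|\vb_2|$ --- which is exactly what you are trying to prove. Orthogonality of $\vb_1,\vb_2$ gives $c_+=c_-$ but the angle $\omega_{\vb_1+\vb_2}-\omega_{\vb_1-\vb_2}=2\arctan(|\vb_2|/|\vb_1|)$ equals $\pi/2$ only when $|\vb_1|=|\vb_2|$; otherwise the complementarity identity fails and a direct check of $\partial g_{\mathbb{B}_\mathrm{ub}}/\partial\varphi$ shows that $\varphi=\omega_{\vb_1\pm\vb_2}$ are not even critical points. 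So your angle-matching step, which equates $\{\omega_{\vb_1+\vb_2},\omega_{\vb_1-\vb_2}\}$ with $\{\pi/4,3\pi/4\}$, is reasoning from a false premise. The conclusion can likely still be reached (for $|\vb_1|\neq|\vb_2|$ the critical points of $g_{\mathbb{B}_\mathrm{ub}}$ lie on the symmetry axes $\varphi=\beta_1,\beta_2$ or come in pairs symmetric about them, and one can check that $\beta_1\pm\pi/4$ are critical only when $|\vb_1|=|\vb_2|$), but that is a different argument from the one you wrote, and it must be carried out explicitly before the final appeal to Proposition \ref{thm_equivalent} is legitimate.
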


This theorem reveals that the equivalence relation $\sim_\mathrm{inc}$ distinguishes the incompatibility of mutually unbiased qubit observables from all other pairs of unbiased qubit observables. 

To prove this theorem, we need the following lemma.
Here we often write $\overline{\beta}_i=\frac{\pi}{2}-\beta_i$ $(i=1,2)$.
\begin{lem} \label{lem_txty}
	Assume that the pairs of unbiased qubit observables $\mathbb{A}_\mathrm{mub}^t=\{\A_\mathrm{ub}^{t\vx},\A_\mathrm{ub}^{t\vy}\}$ and $\mathbb{B}_\mathrm{ub}=\{\A_\mathrm{ub}^{\vb_1},\A_\mathrm{ub}^{\vb_2}\}$ with $\vb_i=(b_i\cos\beta_i,b_i\sin\beta_i,0)^{\mathsf{T}}~(i=1,2)$ satisfy the relation $\mathbb{A}_\mathrm{mub}^t\sim_\mathrm{inc}\mathbb{B}_\mathrm{ub}$.
	Then $\overline{\mathbb{B}}_\mathrm{ub}=\{\A_\mathrm{ub}^{\overline{\vb}_1},\A_\mathrm{ub}^{\overline{\vb}_2}\}$ with $\overline{\vb}_i=(b_i\cos\overline{\beta}_i,b_i\sin\overline{\beta}_i,0)^{\mathsf{T}}=(b_i\sin\beta_i,b_i\cos\beta_i,0)^{\mathsf{T}}~(i=1,2)$ also satisfies the relation $\mathbb{A}_\mathrm{mub}^t\sim_\mathrm{inc}\overline{\mathbb{B}}_\mathrm{ub}~(\sim_\mathrm{inc}\mathbb{B}_\mathrm{ub})$.
\end{lem}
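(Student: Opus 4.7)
The plan is to exploit the fact that $\mathbb{A}_\mathrm{mub}^t=\{\A_\mathrm{ub}^{t\vx},\A_\mathrm{ub}^{t\vy}\}$ is invariant, as an unordered pair, under the Bloch-sphere reflection $R_0\colon (x,y,z)\mapsto(y,x,z)$ across the plane $\{x=y\}$, while this same reflection sends each $\vb_i$ to $\overline{\vb}_i$ and hence sends $\mathbb{B}_\mathrm{ub}$ to $\overline{\mathbb{B}}_\mathrm{ub}$. Because $\sim_\mathrm{inc}$ is defined purely through $\state_0$-compatibility, which for qubit POVMs depends only on the Bloch-vector data, the symmetry $R_0$ should transport the hypothesised equivalence into the desired one.

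Concretely, I would first formalise how $R_0$ acts. For any state $\rho=(\id+\vs\cdot\vsigma)/2$ I set $R_0\rho:=(\id+R_0\vs\cdot\vsigma)/2$, for any qubit effect $E=(e^0\id+\ve\cdot\vsigma)/2$ I set $R_0 E:=(e^0\id+R_0\ve\cdot\vsigma)/2$ (extended effect-wise to POVMs and to joint observables on $\C^2\otimes\C^2$, which are again qubit-effect-valued), and for any state set $\state_0\subset\state$ I set $R_0\state_0:=\{R_0\rho:\rho\in\state_0\}$. Since $R_0$ is a linear isometry of $\R^3$, positivity, normalisation, linear combinations, and marginalisations are all preserved, so this action sends POVMs to POVMs and joint observables to joint observables; moreover the direct calculation $\Tr{(R_0\rho)(R_0 E)}=(e^0+R_0\vs\cdot R_0\ve)/2=(e^0+\vs\cdot\ve)/2=\Tr{\rho E}$ gives the crucial operational identity. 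From this I deduce the key symmetry lemma: for unbiased qubit observables, $\{\A_\mathrm{ub}^{\vc_1},\A_\mathrm{ub}^{\vc_2}\}$ is $\state_0$-compatible iff $\{\A_\mathrm{ub}^{R_0\vc_1},\A_\mathrm{ub}^{R_0\vc_2}\}$ is $R_0\state_0$-compatible, because any certifying compatible pair $\tilde{\mathbb{A}}$ with joint observable $\tilde{\A}'$ for one side transforms, via effect-wise application of $R_0$, into a certifying compatible pair for the other side, and the converse direction follows by applying $R_0$ again (using $R_0^2=\mathrm{id}$).

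With this symmetry in place, the lemma is immediate. Take an arbitrary state set $\state_0'\subset\state$ and let $\state_0:=R_0\state_0'$. Using the symmetry lemma together with the invariance $\{\A_\mathrm{ub}^{R_0(t\vx)},\A_\mathrm{ub}^{R_0(t\vy)}\}=\{\A_\mathrm{ub}^{t\vy},\A_\mathrm{ub}^{t\vx}\}=\mathbb{A}_\mathrm{mub}^t$, the $\state_0'$-compatibility of $\mathbb{A}_\mathrm{mub}^t$ is equivalent to its $\state_0$-compatibility; the hypothesis $\mathbb{A}_\mathrm{mub}^t\sim_\mathrm{inc}\mathbb{B}_\mathrm{ub}$ rewrites this as the $\state_0$-compatibility of $\mathbb{B}_\mathrm{ub}$; and another application of the symmetry lemma together with $R_0\mathbb{B}_\mathrm{ub}=\overline{\mathbb{B}}_\mathrm{ub}$ turns this into the $\state_0'$-compatibility of $\overline{\mathbb{B}}_\mathrm{ub}$. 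As $\state_0'$ was arbitrary, $\mathbb{A}_\mathrm{mub}^t\sim_\mathrm{inc}\overline{\mathbb{B}}_\mathrm{ub}$, and the parenthetical equivalence $\overline{\mathbb{B}}_\mathrm{ub}\sim_\mathrm{inc}\mathbb{B}_\mathrm{ub}$ follows by transitivity of $\sim_\mathrm{inc}$. The only subtle point I expect to confront is that $R_0$ has determinant $-1$ and so is not implemented by any unitary on $\C^2$; but this is not an obstacle because what the argument actually uses is only that $R_0$ is a linear isometry of the Bloch ball, which suffices to preserve positivity of effects, marginal structure of joint observables, and Born-rule probabilities via $R_0\vc\cdot R_0\vs=\vc\cdot\vs$.
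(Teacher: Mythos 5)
Your proof is correct and follows essentially the same route as the paper's: both transport a $\state_0$-compatibility witness through the Bloch-sphere symmetry that exchanges $x$ and $y$, apply the hypothesis $\mathbb{A}_\mathrm{mub}^t\sim_\mathrm{inc}\mathbb{B}_\mathrm{ub}$ on the transformed state set, and transport back using that the symmetry is an involution. The only difference is that the paper uses the proper rotation $(x,y,z)\mapsto(y,x,-z)$, implemented by conjugation with the unitary $U=(\sigma_x+\sigma_y)/\sqrt{2}$, so preservation of POVM structure and Born probabilities is automatic, whereas your improper reflection $(x,y,z)\mapsto(y,x,z)$ (which acts identically on the in-plane vectors involved) requires --- and you correctly supply --- the separate check that an effect-wise linear isometry of Bloch vectors still maps joint observables to joint observables.
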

\begin{proof}
	Define a unitary operator $U$ on $\state$ by 
    \[
    U=\frac{1}{\sqrt{2}}
    \begin{pmatrix}
        0 & 1-i\\
        1+i & 0
    \end{pmatrix}.
    \]
    This operator is self-adjoint and transforms the Pauli operators as $U\sigma_{y}U^\dagger=\sigma_{x}$ and $U\sigma_{x}U^\dagger=\sigma_{y}$.
	By means of this operator $U$, we will prove $\mathbb{A}_\mathrm{mub}^t\sim_\mathrm{inc}\overline{\mathbb{B}}_\mathrm{ub}$.
	Let $\state_0$ be a subset of $\state$ such that $\{\A_\mathrm{ub}^{t\vy},\A_\mathrm{ub}^{t\vx}\}$ is $\state_0$-compatible.
	This implies that we can find a compatible observable set $\{\tilde{\A}_1,\tilde{\A}_2\}$ satisfying 
	\begin{equation}\label{key}
		\begin{aligned}
			&\mathrm{Tr}\Big[\rho\A_\mathrm{ub}^{t\vy}(\pm)\Big]=\mathrm{Tr}\Big[\rho\tilde{\A}_1(\pm)\Big],\\
			&\mathrm{Tr}\Big[\rho\A_\mathrm{ub}^{t\vx}(\pm)\Big]=\mathrm{Tr}\Big[\rho\tilde{\A}_2(\pm)\Big]
		\end{aligned}
	\end{equation}
for all $\rho\in\state_0$. These equations can be rewritten as
	\begin{equation*}\label{key}
	\begin{aligned}
		&\mathrm{Tr}\Big[(U\rho U^\dagger)(U\A_\mathrm{ub}^{t\vy}(\pm)U^\dagger)\Big]=\mathrm{Tr}\Big[(U\rho U^\dagger)(U\tilde{\A}_1(\pm)U^\dagger)\Big],\\
		&\mathrm{Tr}\Big[(U\rho U^\dagger)(U\A_\mathrm{ub}^{t\vx}(\pm)U^\dagger)\Big]=\mathrm{Tr}\Big[(U\rho U^\dagger)(U\tilde{\A}_2(\pm)U^\dagger)\Big],
	\end{aligned}
\end{equation*}
and equivalently,
\begin{equation*}\label{key}
	\begin{aligned}
		&\mathrm{Tr}\Big[(U\rho U^\dagger)\A_\mathrm{ub}^{t\vx}(\pm)\Big]=\mathrm{Tr}\Big[(U\rho U^\dagger)(U\tilde{\A}_1(\pm)U^\dagger)\Big],\\
		&\mathrm{Tr}\Big[(U\rho U^\dagger)\A_\mathrm{ub}^{t\vy}(\pm)\Big]=\mathrm{Tr}\Big[(U\rho U^\dagger)(U\tilde{\A}_2(\pm)U^\dagger)\Big]
	\end{aligned}
\end{equation*}
for all $\rho\in\state_0$.
Since the observable set $\{U\tilde{\A}_1U^\dagger,U\tilde{\A}_2U^\dagger\}$ is compatible, it shows that $\mathbb{A}_\mathrm{mub}^t=\{\A_\mathrm{ub}^{t\vx},\A_\mathrm{ub}^{t\vy}\}$ is $(U\state_0U^\dagger)$-compatible, where $U\state_0 U^\dagger:=\{ U\rho U^\dagger|\, \rho\in\state_0\}$.
Now we can apply the assumption $\mathbb{A}_\mathrm{mub}^t\sim_\mathrm{inc}\mathbb{B}_\mathrm{ub}$ to obtain compatible observables $\{\tilde{\B}_1,\tilde{\B}_2\}$ such that 
\begin{equation*}\label{key}
	\begin{aligned}
		&\mathrm{Tr}\Big[(U\rho U^\dagger)\A_\mathrm{ub}^{\vb_1}(\pm)\Big]=\mathrm{Tr}\Big[(U\rho U^\dagger)\tilde{\B}_1(\pm)\Big],\\
		&\mathrm{Tr}\Big[(U\rho U^\dagger)\A_\mathrm{ub}^{\vb_2}(\pm)\Big]=\mathrm{Tr}\Big[(U\rho U^\dagger)\tilde{\B}_2(\pm)\Big]
	\end{aligned}
\end{equation*}
for all $\rho\in\state_0$. These are equivalent to 
\begin{equation*}\label{key}
	\begin{aligned}
		&\mathrm{Tr}\Big[\rho (U^\dagger\A_\mathrm{ub}^{\vb_1}(\pm)U)\Big]=\mathrm{Tr}\Big[\rho(U^\dagger\tilde{\B}_1(\pm)U)\Big],\\
		&\mathrm{Tr}\Big[\rho(U^\dagger\A_\mathrm{ub}^{\vb_2}(\pm)U)\Big]=\mathrm{Tr}\Big[\rho (U^\dagger\tilde{\B}_2(\pm)U)\Big]
	\end{aligned}
\end{equation*}
for all $\rho\in\state_0$. 
Due to the relation $U^\dagger\A_\mathrm{ub}^{\vb_i}(\pm)U=\A_\mathrm{ub}^{\overline{\vb}_i}(\pm)~(i=1,2)$ and the compatibility of the observable set $\{U^\dagger\tilde{\B}_1U,U^\dagger\tilde{\B}_2U\}$, we conclude that $\overline{\mathbb{B}}_\mathrm{ub}=\{\A_\mathrm{ub}^{\overline{\vb}_1},\A_\mathrm{ub}^{\overline{\vb}_2}\}$
is $\state_0$-compatible and thus $\overline{\mathbb{B}}_\mathrm{ub}\preceq_\mathrm{inc}\{\A_\mathrm{ub}^{t\vy},\A_\mathrm{ub}^{t\vx}\}$.
The other direction $\{\A_\mathrm{ub}^{t\vy},\A_\mathrm{ub}^{t\vx}\}\preceq_\mathrm{inc}\overline{\mathbb{B}}_\mathrm{ub}$ can be proved exactly in the same way.
\end{proof}

\begin{proof}[Proof of Theorem \ref{thm_equiv_txty}]
According to Lemma \ref{lem_txty}, we now have
$\mathbb{A}_\mathrm{mub}^t\sim_\mathrm{inc}\mathbb{B}_\mathrm{ub}\sim_\mathrm{inc}\overline{\mathbb{B}}_\mathrm{ub}$.
It implies that, for any $\varphi\in[0,\pi]$, there is a unique simultaneous solution $X_0(\varphi)\in(0,1]$ for the following quadratic equations 
\begin{align}
		&L_{\mathbb{A}_\mathrm{mub}^t}(\varphi)X_0 (\varphi)^2
		\!+ M_{\mathbb{A}_\mathrm{mub}^t}(\varphi)X_0(\varphi)
		\!+N_{\mathbb{A}_\mathrm{mub}^t}\!=0,\label{eq:X(phi)_A} \\
		&L_{\mathbb{B}_\mathrm{ub}}(\varphi)X_0 (\varphi)^2
		+ M_{\mathbb{B}_\mathrm{ub}}(\varphi)X_0(\varphi)
		+N_{\mathbb{B}_\mathrm{ub}}=0,\label{eq:X(phi)_B} \\
		&L_{\overline{\mathbb{B}}_\mathrm{ub}}(\varphi)X_0 (\varphi)^2
		+ M_{\overline{\mathbb{B}}_\mathrm{ub}}(\varphi)X_0(\varphi)
		+N_{\overline{\mathbb{B}}_\mathrm{ub}}=0
\end{align}
as in the proof of Lemma \ref{lem_equiv}.
Here the coefficients are defined in \eqref{eq:coeff}.
In the case of $\mathbb{A}_\mathrm{mub}^t=\{\A^{t\vx},\A^{t\vy}\}$, the equation \eqref{eq:X(phi)_A} for $X_0(\varphi)$ becomes
\begin{equation*}
	\big[t^4 \cos^2\varphi \sin^2\varphi\big]X_0(\varphi)^2
	+t^2X_0(\varphi)+1-2t^2=0.
	\label{eq:MUt}
\end{equation*}
Since the coefficient $\cos^2\varphi \sin^2\varphi$ of the initial term satisfies $\cos^2\varphi \sin^2\varphi=\cos^2(\varphi+\frac{\pi}{2})\sin^2(\varphi+\frac{\pi}{2})=\cos^2(-\varphi)\sin^2(-\varphi)$, it holds that $X_0(\varphi)=X_0(\varphi+\frac{\pi}{2})=X_0(-\varphi)$ for all $\varphi\in[0,\pi]$.
In particular, we have $X_0(\varphi)=X_0(\frac{\pi}{2}-\varphi)$, and this, together with the properties $L_{\overline{\mathbb{B}}_\mathrm{ub}}(\varphi)=L_{\mathbb{B}_\mathrm{ub}}(\frac{\pi}{2}-\varphi)$ and $N_{\overline{\mathbb{B}}_\mathrm{ub}}(\varphi)=N_{\mathbb{B}_\mathrm{ub}}(\frac{\pi}{2}-\varphi)$, leads to
\begin{align*}
	M_{\overline{\mathbb{B}}_\mathrm{ub}}(\varphi)= M_{\mathbb{B}_\mathrm{ub}}\Big(\frac{\pi}{2}-\varphi\Big)
\end{align*}
for all $\varphi\in[0,\pi]$.
It can be simplified as
\begin{align*}
	\sin(\varphi+\beta_1)\sin(\varphi+\beta_2)\cos(\beta_1+\beta_2)=0
\end{align*}
for all $\varphi\in[0,\pi]$. Thus 
\begin{equation}\label{key}
	\cos(\beta_1+\beta_2)=0,
\end{equation}
which yields
\begin{align}
	\beta_1+\beta_2=\frac{\pi}{2},~\frac{3\pi}{2}.
\end{align}
If $\beta_1 + \beta_2 = \frac{\pi}{2}$, then $X_0(\varphi)=X_0(\frac{\pi}{2}-\varphi)$ implies $X_0(\beta_1)=X_0(\beta_2)$. Assigning $\varphi=\beta_1,\beta_2$ to \eqref{eq:X(phi)_B}, we obtain
\begin{align*}
    L_{\mathbb{B}_\mathrm{ub}}(\beta_1)X_0 (\beta_1)^2+M_{\mathbb{B}_\mathrm{ub}}(\beta_1)X_0(\beta_1)+N_{\mathbb{B}_\mathrm{ub}}=0,\\
     L_{\mathbb{B}_\mathrm{ub}}(\beta_2)X_0 (\beta_1)^2+M_{\mathbb{B}_\mathrm{ub}}(\beta_2)X_0(\beta_1)+N_{\mathbb{B}_\mathrm{ub}}=0,
\end{align*}
respectively. Since $L_{\mathbb{B}_\mathrm{ub}}(\beta_1)=L_{\mathbb{B}_\mathrm{ub}}(\beta_2)$, we have $M_{\mathbb{B}_\mathrm{ub}}(\beta_1)=M_{\mathbb{B}_\mathrm{ub}}(\beta_2)$, which means $(1-\cos^2(\beta_1-\beta_2))(b_1^2 -b_2^2)=0$. The case of $\cos^2(\beta_1-\beta_2)= 1$ contradicts \eqref{det_incomp_UQO}, so $b_1 = b_2$ holds. 
If $\beta_1+\beta_2=\frac{3\pi}{2}$, we can apply the same discussion because $X_0(\varphi)=X_0(\varphi+\frac{\pi}{2})=X_0(-\varphi)$ implies $X_0(\beta_1) =X_0(\beta_2)$. 
According to Proposition \ref{thm_equivalent}, the condition $b_1=b_2$ leads to $(\vb_1,\vb_2)=(\pm t\vx,\pm t\vy)$ or $(\vb_1,\vb_2)=(\pm t\vy,\pm t\vx)$.
\end{proof}

\subsection{Numerical comparison of incompatibility} \label{subsec_numerical_analysis}
    We have shown that the incompatibility of mutually unbiased qubit observables is uniquely characterized within the pairs of unbiased qubit observables in the sense of the relation $\sim_\mathrm{inc}$. 
    The focus now shifts to studying when the ordering $\preceq_\mathrm{inc}$ holds among mutually unbiased qubit observables.
    We prepare mutually unbiased qubit observables $\mathbb{A}_\mathrm{mub}^{t=1}$ and $\mathbb{B}_\mathrm{mub}^t(\theta)$ as
    \begin{align*}
        \mathbb{A}_\mathrm{mub}^{t=1} &= \{\A^\vx,\A^\vy\},\\
        \mathbb{B}_\mathrm{mub}^t(\theta) &= \{\A^{t\vx},\A^{t(\vy \cos\theta + \vz \sin\theta)}\}, 
    \end{align*}
    where $\frac{1}{\sqrt{2}}< t\le 1$ and $0\le \theta \le \frac{\pi}{2}$.
    We are interested in whether the ordering $\preceq_\mathrm{inc}$ establishes novel classifications of device sets beyond Propositions \ref{fund_features} through \ref{prop:obs-channel}.
    In this case, our goal is to determine if there exist parameter regions $(t,\theta)$ satisfying $\mathbb{B}_\mathrm{mub}^t(\theta) \preceq_\mathrm{inc} \mathbb{A}_\mathrm{mub}^{t=1}$ beyond the consequence of Proposition \ref{fund_features} and \ref{prop:post-processing}.

    We begin with identifying the region $(t,\theta)$ where $\mathbb{B}_\mathrm{mub}^t(\theta) \preceq_\mathrm{inc} \mathbb{A}_\mathrm{mub}^{t=1}$ is confirmed by the conventional relations given in Proposition \ref{fund_features} and \ref{prop:post-processing}. 
    If $\mathbb{B}_\mathrm{mub}^t(\theta)$ is a convex combination of $\mathbb{A}_\mathrm{mub}^{t=1}$ and a compatible observable set (see Proposition \ref{fund_features}), we can find a real number $0 \leq \lambda \leq 1$ and a compatible pair $\mathbb{N}=\{\mathsf{N}_1, \mathsf{N}_2\}$ of binary observables satisfying
    \begin{equation}
    \begin{aligned}\label{eq:conv_comb_UQO}
    \A^{t\vx}(+) &= \lambda \A^{\vx}(+) + (1-\lambda)\mathsf{N}_1(+),\\
    \A^{t(\vy \cos\theta + \vz \sin\theta)}(+) &= \lambda \A^{\vy}(+) + (1-\lambda)\mathsf{N}_2(+). 
    \end{aligned}
    \end{equation}
    The extreme case of $\lambda=1$ corresponds to the point $(t,\theta)=(1,0)$.
    For the other cases, let us parametrize $\mathsf{N}_1(+)$ and $\mathsf{N}_2(+)$ similarly to \eqref{def_QO} as
    \[
    \mathsf{N}_i(+)=\frac{1}{2}(n_i^0\1+\vn_i \cdot \vsigma)\quad(i=1,2),
    \]
    where $\vn_i\in\R^3$ is a vector such that $|\vn_i|\le n_i^0\le2-|\vn_i|$.
    Then the relations \eqref{eq:conv_comb_UQO} can be expressed as
    \begin{align*}
        n_1^0&=1,\, \vn_1=\Big(\frac{t-\lambda}{1-\lambda},0,0\Big)^\mathsf{T},\\
        n_2^0&=1,\, \vn_2=\Big(0,\frac{t\cos\theta-\lambda}{1-\lambda},\frac{t\sin\theta}{1-\lambda}\Big)^\mathsf{T}.
    \end{align*}
    Since $\mathbb{N}=\{\mathsf{N}_1,\mathsf{N}_2\}$ is compatible, the vectors $\vn_1$ and $\vn_2$ satisfy the condition \eqref{cond_comp_UQO}, so we have 
    \[
        \frac{1}{t} -1 + \sqrt{2 -\frac{1}{t^2}} \leq \cos\theta ~ \left(\frac{1}{\sqrt{2}} < t \leq 1,\, 0\leq \theta \leq \frac{\pi}{2}\right).
    \]
    For the parameters $(t,\theta)$ satisfying this inequality, we can verify $\mathbb{B}_\mathrm{mub}^t(\theta) \preceq_\mathrm{inc} \mathbb{A}_\mathrm{mub}^{t=1}$ from Proposition \ref{fund_features}. These cases are illustrated by the blue region in FIG. \ref{fig4}.
    Additionally, if $\mathbb{A}_\mathrm{mub}^{t=1}$ and $\mathbb{B}_\mathrm{mub}^t(\theta)$ fulfill the assumption of Proposition \ref{prop:post-processing}, we obtain $\theta=0$ by direct calculations. This case is already analyzed in Proposition \ref{fund_features}.

    Our interest is the existence of classifications that are not explained by Proposition \ref{fund_features} and \ref{prop:post-processing}.
    Specifically, the question is if there exists $(t,\theta)$ for which $\mathbb{B}_\mathrm{mub}^t(\theta) \preceq_\mathrm{inc} \mathbb{A}_\mathrm{mub}^{t=1}$ holds outside the blue region in FIG. \ref{fig4}. 
    To answer this question, we performed a numerical analysis. 
    By using the method described in Example \ref{eg:S0-incomp_qubit_obs}, we can judge the $\state_0$-incompatibility of $\mathbb{A}_\mathrm{mub}^{t=1}$ and $\mathbb{B}_\mathrm{mub}^t(\theta)$ for a fixed $\state_0 \subset \state$.
    Furthermore, for each $(t,\theta)$ we confirm $\mathbb{B}_\mathrm{mub}^t(\theta) \npreceq_\mathrm{inc} \mathbb{A}_\mathrm{mub}^{t=1}$ if we find a state set $\state_0$ such that $\mathbb{B}_\mathrm{mub}^t(\theta)$ is $\state_0$-incompatible but $\mathbb{A}_\mathrm{mub}^{t=1}$ is $\state_0$-compatible.
    We searched numerically for such $\state_0$ from all state sets. Recall that it suffices to investigate two types $\state_0^{(3)}$ and $\state_0^{(2)}$ of state sets (see Example \ref{eg:S0-incomp_qubit_obs}). 
    
    Before considering general $\state_0^{(3)}$, we focus on a special type $\state_0^{(R)}$ defined in \eqref{def_S_R}.
    Thanks to \eqref{cond_S_R-comp}, the conditions for $\mathbb{A}_\mathrm{mub}^{t=1}$ and $\mathbb{B}_\mathrm{mub}^t(\theta) $ to be $\state_0^{(R)}$-incompatible are given as
    \begin{align}
        1-n_y^2-n_z^2&< \frac{2}{n_y^2+1}-1, \label{cond:S_R-comp_A}\\
        1-n_y^2-n_z^2&< \frac{2}{t^2(n_y \cos\theta+n_z \sin\theta)^2+1}-\frac{1}{t^2}, \label{cond:S_R-comp_B}
    \end{align}
    respectively.
    Here $\vn=(n_x,n_y,n_z)^\mathsf{T}$ is the normal vector of $R$ (see \eqref{def_S_R}).
    We searched for a state set $\state_0^{(R)}$ that violates \eqref{cond:S_R-comp_A} but satisfies \eqref{cond:S_R-comp_B} from all grid points of $n_y$ and $n_z$.
    We investigated this for all grid points of $(t,\theta)$.
    In FIG. \ref{fig4}, the gray region illustrates $(t,\theta)$ where $\mathbb{B}_\mathrm{mub}^t(\theta) \npreceq_\mathrm{inc} \mathbb{A}_\mathrm{mub}^{t=1}$ is confirmed by $\state_0^{(R)}$. 

    \begin{figure}
        \centering
        \includegraphics[width=0.95\linewidth]{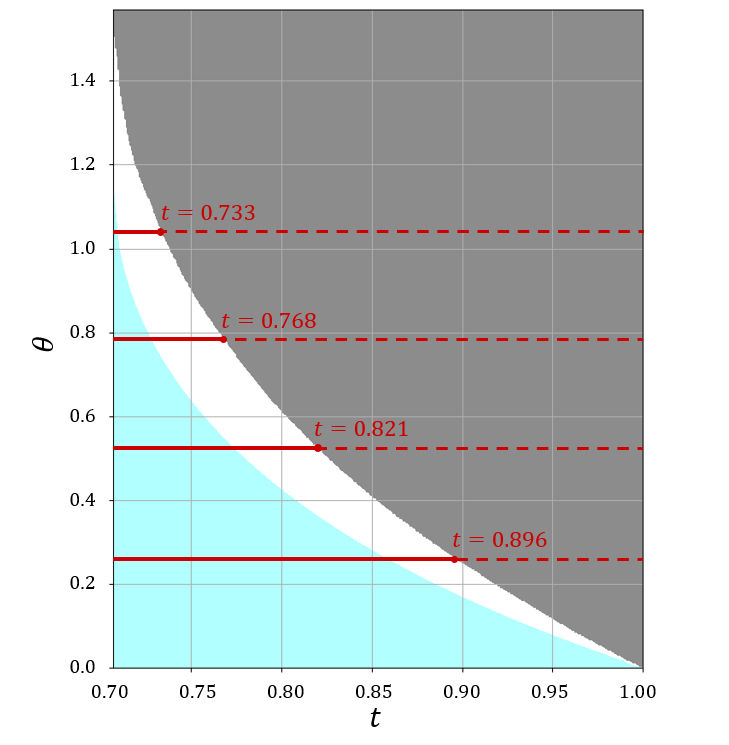}
        \caption{The region $(t,\theta)$ where the ordering $\mathbb{B}_\mathrm{mub}^t(\theta) \preceq_\mathrm{inc} \mathbb{A}_\mathrm{mub}^{t=1}$ holds for mutually unbiased qubit observables $\mathbb{A}_\mathrm{mub}^{t=1} = \{\A^\vx,\A^\vy\}$ and $\mathbb{B}_\mathrm{mub}^t(\theta) = \{\A^{t\vx},\A^{t(\vy \cos\theta + \vz \sin\theta) }\}$. 
            \textbf{Blue region}: The ordering $\mathbb{B}_\mathrm{mub}^t(\theta) \preceq_\mathrm{inc} \mathbb{A}_\mathrm{mub}^{t=1}$ is explained by Proposition \ref{fund_features} and \ref{prop:post-processing}.
            \textbf{Gray region}: We confirm $\mathbb{B}_\mathrm{mub}^t(\theta) \npreceq_\mathrm{inc} \mathbb{A}_\mathrm{mub}^{t=1}$ through only analyzing $\state_0^{(R)}$. 
            \textbf{White region}: We need to investigate general $\state_0$ to judge whether the ordering holds.
            \textbf{Solid lines}: For $\theta=\pi/12,\pi/6,\pi/4,\pi/3$, we newly found that $\mathbb{B}_\mathrm{mub}^t(\theta) \preceq_\mathrm{inc} \mathbb{A}_\mathrm{mub}^{t=1}$ holds. The maximum value of $t$ that realizes $\mathbb{B}_\mathrm{mub}^t(\theta) \preceq_\mathrm{inc} \mathbb{A}_\mathrm{mub}^{t=1}$ for each $\theta$ lies in the boundary of the white and the gray regions.}
        \label{fig4}
    \end{figure}

    The remaining region (the white region in FIG. \ref{fig4}) represents the cases where we have to analyze state sets other than $\state_0^{(R)}$ to determine whether $\mathbb{B}_\mathrm{mub}^t(\theta) \preceq_\mathrm{inc} \mathbb{A}_\mathrm{mub}^{t=1}$ holds. To address this probrem, we now turn to general state sets $\state_0^{(3)}$ and $\state_0^{(2)}$.
    For a fixed $\state_0^{(3)}$ parametrized by $r$ and $\vn$ (see FIG. \ref{fig1}), the conditions for $\mathbb{A}_\mathrm{mub}^{t=1}$ and $\mathbb{B}_\mathrm{mub}^t(\theta)$ to be $\state_0^{(3)}$-incompatible are given by \eqref{cond_S0-comp_(3)} as
    \[
        \min _{\lambda_1,\lambda_2}F_{\mathbb{A}_\mathrm{mub}^{t=1}}^{(r,\vn)}(\lambda_1,\lambda_2) > 0
    \]
    and
    \[
        \min _{\lambda_1,\lambda_2}F_{\mathbb{B}_\mathrm{mub}^t(\theta)}^{(r,\vn)}(\lambda_1,\lambda_2) > 0,
    \]
    respectively. Thus the condition for $\mathbb{B}_\mathrm{mub}^t(\theta) \npreceq_\mathrm{inc} \mathbb{A}_\mathrm{mub}^{t=1}$ is
    \begin{align}
        \max_{r,\vn} \min _{\lambda_1,\lambda_2}F_{\mathbb{B}_\mathrm{mub}^t(\theta)}^{(r,\vn)}(\lambda_1,\lambda_2) > 0, \label{maxminF_(3)}
    \end{align}
    where the maximization is over all $(r,\vn)$ characterizing $\state_0^{(3)}$ for which $\mathbb{A}_\mathrm{mub}^{t=1}$ is $\state_0^{(3)}$-compatible.
    Similarly, for a fixed $\state_0^{(2)}$ parametrized by $r,\, \vn$ and $\vm$ (see FIG. \ref{fig2}), the condition for $\mathbb{B}_\mathrm{mub}^t(\theta) \npreceq_\mathrm{inc} \mathbb{A}_\mathrm{mub}^{t=1}$ is 
    \begin{align}
        \max_{r,\vn,\vm}  \min _{\substack{\lambda_1,\lambda_2,\\ \xi_1,\xi_2}}F_{\mathbb{B}_\mathrm{mub}^t(\theta)}^{(r,\vn,\vm)}(\lambda_1,\lambda_2,\xi_1,\xi_2) > 0, \label{maxminF_(2)}
    \end{align}
    where the maximization is over all $(r,\vn,\vm)$ characterizing $\state_0^{(2)}$ for which $\mathbb{A}_\mathrm{mub}^{t=1}$ is $\state_0^{(2)}$-compatible.
    For each $(t,\theta)$ in the white region of FIG. \ref{fig4}, if both \eqref{maxminF_(3)} and \eqref{maxminF_(2)} are violated, then we can verify that $\mathbb{B}_\mathrm{mub}^t(\theta) \preceq_\mathrm{inc} \mathbb{A}_\mathrm{mub}^{t=1}$ holds. 
    We focused on four representative values $\theta=\pi/12,\pi/6,\pi/4,\pi/3$ and numerically analyzed these conditions.
    The minimization of \eqref{maxminF_(3)} and \eqref{maxminF_(2)} was computed by solving Sequential Least Squares Programming (SLSQP) \cite{kraft1988software}, and the maximization was performed over all grid points of $r,\vn$ (and $\vm$).
    The maximum values of $t$ for which $\mathbb{B}_\mathrm{mub}^t(\theta) \preceq_\mathrm{inc} \mathbb{A}_\mathrm{mub}^{t=1}$ holds for each $\theta$ are listed in TABLE \ref{table1}. 
    Here we can conclude that the ordering relation $\mathbb{B}_\mathrm{mub}^t(\theta) \preceq_\mathrm{inc} \mathbb{A}_\mathrm{mub}^{t=1}$ is established beyond the previously derived cases (the blue region in FIG. \ref{fig4}).
    The regions $(t,\theta)$ where we newly revealed $\mathbb{B}_\mathrm{mub}^t(\theta) \preceq_\mathrm{inc} \mathbb{A}_\mathrm{mub}^{t=1}$ are illustrated in FIG. \ref{fig4} as an intersection of the solid line and the white region.
    Notably, the maximum values of $t$ satisfying $\mathbb{B}_\mathrm{mub}^t(\theta) \preceq_\mathrm{inc} \mathbb{A}_\mathrm{mub}^{t=1}$ coincide with the boundary between the white and gray regions.
    This shows that we can judge whether the ordering establishes through only state sets $\state_0^{(R)}$ for the cases of $\theta=\pi/12,\pi/6,\pi/4,\pi/3$.
    Moreover, we conjecture that $\mathbb{B}_\mathrm{mub}^t(\theta) \preceq_\mathrm{inc} \mathbb{A}_\mathrm{mub}^{t=1}$ holds for all $(t,\theta)$ in the white region of FIG. \ref{fig4}.
    If this conjecture is true, it is enough to examine the state sets $\state_0^{(R)}$, which are easier to investigate than the other types of $\state_0$, to determine whether the ordering holds. 
    Recall that our previous result Theorem \ref{thm_equiv_txty} was derived also from the analysis of $\state_0^{(R)}$.
    Thus, we expect $\state_0^{(R)}$ to be a crucial type of state sets to study the incompatibility of mutually unbiased qubit observables.

\begin{table}[h]
\caption{The maximum values of $t$ for which $\mathbb{B}_\mathrm{mub}^t(\theta) \preceq_\mathrm{inc} \mathbb{A}_\mathrm{mub}^{t=1}$ holds for each $\theta$. These values coincide with the boundary of the gray region in FIG. \ref{fig4}. }
\vspace{0.2cm}
\begin{tabular}{
>{\columncolor[HTML]{FFFFFF}}c |
>{\columncolor[HTML]{FFFFFF}}c 
>{\columncolor[HTML]{FFFFFF}}c 
>{\columncolor[HTML]{FFFFFF}}c 
>{\columncolor[HTML]{FFFFFF}}c }
 $\quad\theta\quad$ & $\quad\pi/12\quad$ & $\quad\pi/6\quad$ & $\quad\pi/4\quad$ & $\quad\pi/3\quad$ \\ \hline
$t$ & 0.896 & 0.821 & 0.768 & 0.733
\end{tabular}
\label{table1}
\end{table}

	\section{Conclusion} \label{sec_conclusion}
In this paper, we have introduced an operational ordering $\preceq_\mathrm{inc}$ to compare the incompatibility of device sets.
This ordering is based on the ease of detecting incompatibility when the available states are restricted.
We have shown that the ordering is not only consistent with the fundamental properties of incompatibility but also establishes novel, more fine-grained hierarchies.
These hierarchies were demonstrated through a typical class of incompatibility exhibited by mutually unbiased qubit observables.
Furthermore, we studied the equivalence relation $\sim_\mathrm{inc}$ induced by the ordering.
Our analysis found that the incompatibility of mutually unbiased qubit observables is distinct from all other pairs of unbiased qubit observables in terms of the relation $\sim_\mathrm{inc}$.
We also revealed that the ordering plays a significant role in distributed sampling, which serves as a certifier for quantum communication.

Concrete analysis of higher-dimensional systems, as well as a larger number or other types of devices, will provide further insight.
However, we have to overcome several difficulties to investigate these situations.
For instance, the counterparts of $\tilde{\A}_i$ in \eqref{tilde_A} become quite complicated, and the necessary and sufficient conditions of incompatibility for such situations are unknown.
In these cases, semidefinite programming (SDP) may offer a possible avenue for future investigations. 
Additionally, it is worth noting that the operational ordering here can be introduced naturally in the framework of general probabilistic theories (GPTs) \cite{Plavala2023,Lami2017,Takakura2022,Wilce2025GPT}.
Devices are basic elements also in the realm of GPTs, and their incompatibility has been actively investigated \cite{PhysRevLett.99.240501,PhysRevA.94.042108,PhysRevA.95.032127,PhysRevA.98.012133,kuramochi2020compatibility,Takakura2020,Takakura_2021,Plavala_2022,PhysRevLett.128.160402,PhysRevA.110.062210,m7ln-tb1s}.
It will be an interesting problem to study whether our way of comparing incompatibility reveals structures specific to quantum incompatibility.
Future work will also address the generalization of Proposition \ref{thm_equivalent}, as it remains an open problem whether the result holds without the constraints of $|\va_1|=|\va_2|$ and $|\vb_1|=|\vb_2|$.
Another promising direction is the application of the ordering to specific no-go theorems such as the no-broadcasting theorem.
This line of research may reveal insights into both quantum theory and GPTs.\\

\section*{Acknowledgment}
We would like to thank Takayuki Miyadera and Gen Kimura for fruitful discussions.
We also appreciate the anonymous referee.
K.T. acknowledges support from JST SPRING Grant No. JPMJSP2125.
R.T. acknowledges support from JST COI-NEXT program Grant No. JPMJPF2014 and JSPS KAKENHI Grant No. JP25K17314.


%

\end{document}